\title{Probabilistic Process Algebra \\ and Strategic Interleaving}
\author{C.A. Middelburg}
\institute{Informatics Institute, Faculty of Science, University of
           Amsterdam, \\
           Science Park~904, 1098~XH Amsterdam, the Netherlands \\
           \email{C.A.Middelburg@uva.nl}}
\begin{document}
\maketitle

\begin{abstract}
We first present a probabilistic version of \ACP\ that rests on the 
principle that probabilistic choices are always resolved before choices 
involved in alternative composition and parallel composition are 
resolved and then extend this probabilistic version of \ACP\ with a form 
of interleaving in which parallel processes are interleaved according to 
what is known as a process-scheduling policy in the field of operating 
systems. 
We use the term strategic interleaving for this more constrained form of 
interleaving.
The extension covers probabilistic process-scheduling \mbox{policies}.
\begin{keywords} 
process algebra, probabilistic choice, parallel composition, 
\linebreak[3]
arbitrary interleaving, strategic interleaving.
\end{keywords}%
\begin{classcode}
D.1.3, D.4.1, F.1.2
\end{classcode}
\end{abstract}

\section{Introduction}
\label{sect-intro}

First of all, we present a probabilistic version of 
\ACP~\cite{BW90,BK84a}, called \pACP\ (probabilistic ACP).
\pACP\ is a minor variant of the subtheory of \pACPt~\cite{AG09a} in 
which the operators for abstraction from some set of actions are 
lacking.
It is a minor variant of that subtheory because we take functions 
whose range is the carrier of a signed cancellation meadow instead of a 
field as probability measures, add probabilistic choice operators for 
the probabilities $0$ and $1$, and have an additional axiom because of 
the inclusion of these operators.
The probabilistic choice operators for the probabilities $0$ and $1$ 
cause no problem because a meadow has a total multiplicative inverse 
operation where the multiplicative inverse of zero is zero.
Because of this property, we could also improve the operational 
semantics of \pACP.
In particular, we could reduce the number of rules for the operational 
semantics and replace all negative premises by positive premises in the 
remaining rules. 

We also extend \pACP\ with a form of interleaving in which parallel 
processes are interleaved according to what is known as a 
process-scheduling policy in the field of operating systems
(see e.g.~\cite{SGG18a,TB15a}). 
In~\cite{BM17b}, we have extended \ACP\ with this more constrained form 
of interleaving.
In that paper, we introduced the term strategic interleaving for this 
form of interleaving and the term interleaving strategy for 
process-scheduling policy.
Unlike in the extension presented in~\cite{BM17b}, probabilistic 
interleaving strategies are covered in the extension presented in the
current paper.
More precisely, the latter extension assumes a generic interleaving 
strategy that can be instantiated with different specific interleaving 
strategies, including probabilistic ones.

A main contribution of this paper to the area of probabilistic 
process algebra is a semantics of \pACP\ for which the axioms of 
\pACP\ are sound and complete. 
For \pACPt, such a semantic is not available.
For \pTCPt, a variant of \pACPt, an erroneous semantics is given 
in~\cite{Geo11a} (see Section~\ref{subsect-remarks} for details).
This rules out the possibility to derive a semantics of \pACP\ or 
\pACPt\ from this semantics of \pTCPt. 
Another contribution of this paper is an extension of \pACP\ with
strategic interleaving that covers probabilistic interleaving 
strategies.
The work presented in~\cite{BM17b} and this paper is the only work on 
strategic interleaving in the setting of a general algebraic theory of 
processes like ACP, CCS and CSP.

The motivation for elaborating upon the work on \pACPt\ presented 
in~\cite{AG09a} is that it introduces a parallel composition operator
characterized by remarkably simple and natural axioms --- axioms that 
should be backed up by an appropriate semantics. 
The motivation for considering strategic interleaving in the setting of
\ACP\ originates from an important feature of many contemporary 
programming languages, namely multi-threading 
(see Section~\ref{subsect-motivation} for details).

The rest of this paper is organized as follows.
First, the theory of signed cancellation meadows is briefly summarized  
(Section~\ref{sect-meadows}).
Next, \pACP\ and its extension with guarded recursion, called \pACPr, 
is presented (Section~\ref{sect-pACPr}).
After that, the extension of \pACPr\ with strategic interleaving is 
presented (Section~\ref{sect-pSI}). 
Finally, we make some concluding remarks (Section~\ref{sect-concl}).

\section{Signed Cancellation Meadows}
\label{sect-meadows}

Later in this paper, we will take functions whose range is the carrier 
of a signed cancellation meadow as probability measures.
Therefore, we briefly summarize the theory of signed cancellation 
meadows in this section.

In~\cite{BT07a}, meadows are proposed as alternatives for fields with a
purely equational axiomatization.
Meadows are commutative rings with a multiplicative identity element and 
a total multiplicative inverse operation where the multiplicative 
inverse of zero is zero.
Fields whose multiplicative inverse operation is made total by imposing 
that the multiplicative inverse of zero is zero are called 
zero-totalized fields.
All zero-totalized fields are meadows, but not conversely.

\sloppy
Cancellation meadows are meadows that satisfy the 
\emph{cancellation axiom} 
\mbox{$ x \neq 0 \Land x \mul y = x \mul z \Limpl y = z$}.
The cancellation meadows that satisfy in addition the 
\emph{separation axiom} $0 \neq 1$ are exactly the zero-totalized 
fields.

Signed cancellation meadows are cancellation meadows expanded with a
signum operation.
The signum operation makes it possible that the predicates $<$ and 
$\leq$ are defined (see below).

The signature of signed cancellation meadows consists of the following
constants and operators:
\begin{itemize}
\item
the \emph{additive identity} constant $0$;
\item
the \emph{multiplicative identity} constant $1$;
\item
the binary \emph{addition} operator ${} +$ {};
\item
the binary \emph{multiplication} operator ${} \mul {}$;
\item
the unary \emph{additive inverse} operator $- {}$\,;
\item
the unary \emph{multiplicative inverse} operator ${}\minv$\,;
\item
the unary \emph{signum} operator $\sign$.
\end{itemize}

Terms are build as usual.
We use prefix notation, infix notation, and postfix notation as usual.
We also use the usual precedence convention.
We introduce subtraction and division as abbreviations:
$t - t'$ abbreviates $t + (-t')$ and
$t / t'$ abbreviates $t \mul ({t'}\minv)$.

Signed cancellation meadows are axiomatized by the equations in 
Tables~\ref{eqns-meadow} and~\ref{eqns-signum} and the above-mentioned
cancellation axiom.
\begin{table}[!t]
\caption
{Axioms of a meadow}
\label{eqns-meadow}
\begin{eqntbl}
\begin{eqncol}
(x + y) + z = x + (y + z)                                             \\
x + y = y + x                                                         \\
x + 0 = x                                                             \\
x + (-x) = 0
\end{eqncol}
\qquad\quad
\begin{eqncol}
(x \mul y) \mul z = x \mul (y \mul z)                                 \\
x \mul y = y \mul x                                                   \\
x \mul 1 = x                                                          \\
x \mul (y + z) = x \mul y + x \mul z
\end{eqncol}
\qquad\quad
\begin{eqncol}
(x\minv)\minv = x                                                     \\
x \mul (x \mul x\minv) = x                           
\end{eqncol}
\end{eqntbl}
\end{table}

\begin{table}[!t]
\caption{Additional axioms for the signum operator}
\label{eqns-signum}
\begin{eqntbl}
\begin{eqncol}
\sign(x / x) = x / x                                                  \\
\sign(1 - x / x) = 1 - x / x                                          \\
\sign(-1) = -1
\end{eqncol}
\qquad\quad
\begin{eqncol}
\sign(x\minv) = \sign(x)                                              \\
\sign(x \mul y) = \sign(x) \mul \sign(y)                              \\
(1 - \frac{\sign(x) - \sign(y)}{\sign(x) - \sign(y)}) \mul
(\sign(x + y) - \sign(x)) = 0
\end{eqncol}
\end{eqntbl}
\end{table}

The predicates $<$ and $\leq$ are defined in signed cancellation meadows
as follows: 
\begin{ldispl}
x < y \Liff \sign(y - x) = 1\;,
\\
x \leq y \Liff \sign(\sign(y - x) + 1) = 1\;.
\end{ldispl}%
Because $\sign(\sign(y - x) + 1) \neq -1$, we have
$0 \leq x \leq 1 \Liff
 \sign(\sign(x) + 1) \mul \sign(\sign(1 - x) + 1) = 1$.
We will use this equivalence below to describe the set of probabilities.

In~\cite{BP13a}, Kolmogorov's probability axioms for finitely additive 
probability spaces are rephrased for the case where probability measures 
are functions whose range is the carrier of a signed cancellation 
meadow.

\section{\pACP\ with Guarded Recursion}
\label{sect-pACPr}

In this section, we introduce \pACP\ (probabilistic Algebra of 
Communicating Processes) and guarded recursion in the setting of \pACP.
The algebraic theory \pACP\ is a minor variant of the subtheory of 
\pACPt~\cite{AG09a} in which the operators for abstraction from some set 
of actions are lacking.
\pACP\ is a variant of that subtheory because:
(a)~the range of the functions that are taken as probability measures is 
the carrier of a signed cancellation meadow in \pACP\ and the carrier of 
a field in \pACPt;
(b)~probabilistic choice operators for the probabilities~$0$ and~$1$,  
together with an axiom concerning the these two operators, are found in 
\pACP, but not in \pACPt.
Moreover, a semantics is available for \pACP, but not really for 
\pACPt.%
\footnote
{Issues with the semantics of \pACPt\ are discussed in 
 Section~\ref{subsect-remarks}.}

\subsection{\pACP}
\label{subsect-pACP}

In \pACP, it is assumed that a fixed but arbitrary set $\Act$ of 
\emph{actions}, with $\dead \notin \Act$, has been given.
We write $\Actd$ for $\Act \union \set{\dead}$.
Related to this, it is assumed that a fixed but arbitrary commutative 
and associative \emph{communication} function 
$\funct{\commf}{\Actd \x \Actd}{\Actd}$, with 
$\commf(\dead,a) = \dead$ for all $a \in \Actd$, has been given.
The function $\commf$ is regarded to give the result of synchronously
performing any two actions for which this is possible, and to give 
$\dead$ otherwise.

It is also assumed that a fixed but arbitrary signed cancellation 
meadow $\fM$ has been given.
We denote the interpretations of the constants and operators of signed 
cancellation meadows in $\fM$ by the constants and operators themselves.
We write $\Prob$ for the set
$\set{\pi \in \fM \where
 \sign(\sign(\pi) + 1) \mul \sign(\sign(1 - \pi) + 1) = 1}$
of \emph{probabilities}.

The signature of \pACP\ consists of the following constants and 
operators:
\begin{itemize}
\item
for each $a \in \Act$, the \emph{action} constant 
$a$\,;
\item
the \emph{inaction} constant $\dead$\,;
\item
the binary \emph{alternative composition} operator 
${} \altc {}$\,;
\item
the binary \emph{sequential composition} operator 
${} \seqc {}$\,;
\item
for each $\pi \in \Prob$,
the binary \emph{probabilistic choice} operator 
${} \paltc{\pi} {}$\,;
\item
the binary \emph{parallel composition} operator 
${} \parc {}$\,;
\item
the binary \emph{left merge} operator 
${} \leftm {}$\,;
\item
the binary \emph{communication merge} operator 
${} \commm {}$\,;
\item
for each $H \subseteq \Act$, the unary \emph{encapsulation} operator
$\encap{H}$\,.
\end{itemize}
We assume that there is a countably infinite set $\cX$ of variables, 
which contains $x$, $y$ and $z$, with and without subscripts.
Terms are built as usual.
We use infix notation for the binary operators.
The precedence conventions used with respect to the operators of \pACP\
are as follows: $\altc$ binds weaker than all others, $\seqc$ binds
stronger than all others, and the remaining operators bind equally
strong.

The constants and operators of \pACP\ can be explained as follows:
\begin{itemize}
\item
the constant $a$ denotes the process that can only perform action $a$ 
and after that terminate successfully;
\item
the constant $\dead$ denotes the process that cannot do anything;
\item
a closed term of the form $t \altc t'$ denotes the process that can 
behave as the process denoted by $t$ or as the process denoted by $t'$, 
where the choice between the two is resolved exactly when the first 
action of one of them is performed;
\item
a closed term of the form $t \seqc t'$ denotes the process that can 
first behave as the process denoted by $t$ and can next behave as the 
process denoted by $t'$;
\item
a closed term of the form $t \paltc{\pi} t'$ denotes the process that 
will behave as the process denoted by $t$ with probability $\pi$ and as 
the process denoted by $t'$ with probability $1 - \pi$, where the choice 
between the two processes is resolved before the first action of one of 
them is performed; 
\item
a closed term of the form $t \parc t'$ denotes the process that can 
behave as the process that proceeds with the processes denoted by $t$ 
and $t'$ in parallel;
\item
a closed term of the form $t \leftm t'$ denotes the process that can 
behave the same as the process denoted by $t \parc t'$, except that it 
starts with performing an action of the process denoted by $t$;
\item
a closed term of the form $t \commm t'$ denotes the process that can 
behave the same as the process denoted by $t \parc t'$, except that it 
starts with performing an action of the process denoted by $t$ and an 
action of the process denoted by~$t'$ synchronously;
\item
a closed term of the form $\encap{H}(t)$ denotes the process that can
behave the same as the process denoted by $t$, except that actions from 
$H$ are blocked.
\end{itemize}
Processes in parallel are considered to be arbitrarily interleaved. 
With that, probabilistic choices are resolved before interleaving steps 
are enacted.

The operators $\leftm$ and $\commm$ are of an auxiliary nature.
They are needed to axiomatize \pACP.

The axioms of \pACP\ are the equations given in Table~\ref{axioms-pACP}.
\begin{table}[!t]
\caption{Axioms of \pACP}
\label{axioms-pACP}
\begin{eqntbl}
\begin{axcol}
x \altc y = y \altc x                                  & \axiom{A1}   \\
(x \altc y) \altc z = x \altc (y \altc z)              & \axiom{A2}   \\
a \altc a = a                                          & \axiom{A3'}  \\
(x \altc y) \seqc z = x \seqc z \altc y \seqc z        & \axiom{A4}   \\
(x \seqc y) \seqc z = x \seqc (y \seqc z)              & \axiom{A5}   \\
x \altc \dead = x                                      & \axiom{A6}   \\
\dead \seqc x = \dead                                  & \axiom{A7}   \\
{}                                                                    \\
{}                                                                    \\
\encap{H}(a) = a                \hfill \mif a \notin H & \axiom{D1}   \\
\encap{H}(a) = \dead            \hfill \mif a \in H    & \axiom{D2}   \\
\encap{H}(x \altc y) = \encap{H}(x) \altc \encap{H}(y) & \axiom{D3}   \\
\encap{H}(x \seqc y) = \encap{H}(x) \seqc \encap{H}(y) & \axiom{D4}   \\
\multicolumn{2}{@{}c@{}}{\dotfill} \\[1ex] 
x \paltc{\pi} y = y \paltc{1{-}\pi} x                  & \axiom{pA1}  \\
(x \paltc{\pi} y) \paltc{\rho} z = \\ \quad\;\;
x \paltc{\pi{\mul}\rho}
(y \paltc{\frac{(1{-}\pi){\mul}\rho}{1{-}\pi{\mul}\rho}} z)
                                                       & \axiom{pA2}  \\
x \paltc{\pi} x = x                                    & \axiom{pA3}  \\
(x \paltc{\pi} y) \seqc z = x \seqc z \paltc{\pi} y \seqc z
                                                       & \axiom{pA4}  \\
(x \paltc{\pi} y) \altc z = (x \altc z) \paltc{\pi} (y \altc z)
                                                       & \axiom{pA5}  \\
x \paltc{1} y = x                                      & \axiom{pA6}  
\end{axcol}
\quad\;\; 
\begin{axcol}
x = x \altc x \Land y = y \altc y \Limpl \\ \quad\;\;
x \parc y =
          x \leftm y \altc y \leftm x \altc x \commm y & \axiom{CM1'}\\
a \leftm x = a \seqc x                                 & \axiom{CM2}  \\
a \seqc x \leftm y = a \seqc (x \parc y)               & \axiom{CM3}  \\
(x \altc y) \leftm z = x \leftm z \altc y \leftm z     & \axiom{CM4}  \\
a \seqc x \commm b = \commf(a,b) \seqc x               & \axiom{CM5}  \\
a \commm b \seqc x = \commf(a,b) \seqc x               & \axiom{CM6}  \\
a \seqc x \commm b \seqc y =
                        \commf(a,b) \seqc (x \parc y)  & \axiom{CM7}  \\
(x \altc y) \commm z = x \commm z \altc y \commm z     & \axiom{CM8}  \\
x \commm (y \altc z) = x \commm y \altc x \commm z     & \axiom{CM9}  \\
\dead \commm x = \dead                                 & \axiom{CM10} \\
x \commm \dead = \dead                                 & \axiom{CM11} \\
a \commm b = \commf(a,b)                               & \axiom{CM12} \\
\multicolumn{2}{@{}c@{}}{\dotfill} \\[1ex] 
(x \paltc{\pi} y) \parc z = (x \parc z) \paltc{\pi} (y \parc z)
                                                       & \axiom{pCM1} \\                                                               
x \parc (y \paltc{\pi} z) = (x \parc y) \paltc{\pi} (x \parc z)
                                                       & \axiom{pCM2} \\
(x \paltc{\pi} y) \leftm z = (x \leftm z) \paltc{\pi} (y \leftm z)
                                                       & \axiom{pCM3} \\                                                               
x \leftm (y \paltc{\pi} z) = (x \leftm y) \paltc{\pi} (x \leftm z)
                                                       & \axiom{pCM4} \\
(x \paltc{\pi} y) \commm z = (x \commm z) \paltc{\pi} (y \commm z)
                                                       & \axiom{pCM5} \\                                                               
x \commm (y \paltc{\pi} z) = (x \commm y) \paltc{\pi} (x \commm z)
                                                       & \axiom{pCM6} \\
\encap{H}(x \paltc{\pi} y) = \encap{H}(x) \paltc{\pi} \encap{H}(y)
                                                       & \axiom{pD}  
\end{axcol}
\end{eqntbl}
\end{table}
In these equations, $a$ and $b$ stand for arbitrary constants of \pACP\
(which include the action constants and the inaction constant), 
$H$ stands for an arbitrary subset of $\Act$, and $\pi$ and $\rho$ stand
for arbitrary probabilities from $\Prob$.
Moreover, $\commf(a,b)$ stands for the action constant for the action 
$\commf(a,b)$.
In D1 and D2, side conditions restrict what $a$ and $H$ stand for.

The equations in Table~\ref{axioms-pACP} above the dotted lines, with
$\axiom{A3'}$ replaced by the equation $x \altc x = x$ and 
$\axiom{CM1'}$ replaced by its consequent, constitute an axiomatization 
of \ACP.
In presentations of \ACP, $\commf(a,b)$ is regularly replaced by 
$a \commm b$ in CM5--CM7.
By CM12, which is more often called CF, these replacements give rise to
an equivalent axiomatization.
Moreover, CM10 and CM11 are usually absent.
These equations are not derivable from the other axioms, but all their 
closed substitution instances are derivable from the other axioms and 
they hold in all models that have been considered for \ACP\ in the 
literature.

With regard to axiom $\axiom{CM1'}$, we remark that, 
for each closed term $t$ of \pACP\ that is not derivably equal to a term
of the form $t' \paltc{\pi} t''$ with $\pi \in \Prob \diff \set{0,1}$, 
$t = t \altc t$ is derivable. 
In other words, if the process denoted by $t$ is not initially 
probabilistic in nature, then $t = t \altc t$ is derivable.

\pACP\ has pA1, pA3--pA5, pCM1--pCM2, and pD in common with \pACPt\ as 
presented in~\cite{AG09a}.
Replacement of axiom pA2 of \pACP\ by axiom pA2 of \pACPt, that is
$x \paltc{\pi} (y \paltc{\rho} z) = 
 (x \paltc{\frac{\pi}{\pi{+}\rho{-}\pi{\mul}\rho}} y)
  \paltc{\pi{+}\rho{-}\pi{\mul}\rho} z$,
gives rise to an equivalent axiomatization.
In~\cite{Geo11a}, axioms pCM3--pCM6 are presented as axioms of \pTCPt, 
a variant of \pACPt\ in which the action constants have been replaced by 
action prefixing operators and a constant for the process that is only 
capable of terminating successfully.
Therefore, axioms pCM3--pCM6 may be absent in~\cite{AG09a} by mistake.

Axiom pA6 is new.
Notice that $(x \paltc{0} y) \paltc{0} z = z$ and 
$x \paltc{0} (y \paltc{0} z) = z$ are derivable from pA1 and pA6. 
This is consistent with the instance of pA2 where $\pi = \rho = 0$
because in meadows $0 / 0 = 0$.

In the sequel, we will use the notation $\Altc{i=1}{n} t_i$, where 
$n \geq 1$, for right-nested alternative compositions.
For each $n \in \Natpos$, 
the term $\Altc{i = 1}{n} t_i$ is defined by induction on $n$ as 
follows:%
\footnote
{We write $\Natpos$ for the set $\set{n \in \Nat \where n \geq 1}$ of 
 positive natural numbers.}
\begin{ldispl}
\Altc{i = 1}{1} t_i = t_1 
\quad \mathrm{and} \quad
\Altc{i = 1}{n + 1} t_i = t_1 \altc \Altc{i = 1}{n} t_{i+1}\;.
\end{ldispl}%
In addition, we will use the convention that 
$\Altc{i = 1}{0} t_i = \dead$.

In the sequel, we will also use the notation 
$\PRC{i = 1}{n}{\pi_i}{t_i}$ where $n \geq 1$ and  
$\sum_{i < n} \pi_i = 1$, for right-nested probabilistic choices.
For each $n \in \Natpos$, the term $\PRC{i = 1}{n}{\pi_i}{t_i}$ is 
defined by induction on $n$ as follows:
\begin{ldispl}
\PRC{i = 1}{1}{\pi_i}{t_i}   = t_1
\quad \mathrm{and} \quad
\PRC{i = 1}{n+1}{\pi_i}{t_i} =
t_1 \paltc{\pi_1} 
(\PRC{i = 1}{n}{\frac{\pi_{i+1}}{1 - \pi_1}}{t_{i+1}})\;.
\end{ldispl}%
The process denoted by $\PRC{i = 1}{n + 1}{\pi_i}{t_i}$ will behave like 
the process denoted by $t_1$ with probability $\pi_1$, \ldots,  
like the process denoted by $t_{n+1}$ with probability $\pi_{n+1}$.

In the next definition, the following summand notation is used.
Let $t$ and $t'$ be closed \pACP\ terms.
Then we write $t \summ t'$ for the assertion that $t \equiv t'$ or there 
exists a closed \pACP\ term $t''$ such that $t \altc t'' = t'$ is 
derivable from axioms A1 and A2 
and we write $t \psumm t'$ for the assertion that $t \equiv t'$ or there 
exists a closed \pACP\ term $t''$ and a $\pi\in \Prob \diff \set{0,1}$ 
such that $t \paltc{\pi} t'' = t'$ is derivable from axioms pA1 and 
pA2.%
\footnote
{We write $t \equiv t'$ to indicate that $t$ is syntactically equal to
 $t'$.}

Each closed \pACP\ term is derivably equal to a proper basic term of 
\pACP. \linebreak[2]
The set $\cB$ of \emph{proper basic terms of} \pACP\ is inductively 
defined, simultaneously with auxiliary sets $\cB^0$, $\cB^1$, $\cB^2$, 
and $\cB^3$, by the following rules:
\begin{itemize}
\item
$\dead \in \cB^0$;
\item
if $a \in \Act$, then $a \in \cB^1$;
\item
if $a \in \Act$ and $t \in \cB$, then $a \seqc t \in \cB^1$;
\item
if $t \in \cB^1$, then $t \in \cB^2$;
\item
if $t \in \cB^1$, $t' \in \cB^2$, and not $t \summ t'$,
then $t \altc t' \in \cB^2$;
\item
if $t \in \cB^2$, then $t \in \cB^3$;
\item
if $t \in \cB^2$, $t' \in \cB^3$, not $t \psumm t'$, and 
$\pi \in \Prob \diff \set{0,1}$, then $t \paltc{\pi} t' \in \cB^3$;
\item
if $t \in \cB^0$, then $t \in \cB$;
\item
if $t \in \cB^3$, then $t \in \cB$.
\end{itemize}
\begin{proposition}
\label{prop-elim-basic}
For each \pACP\ term $t$, there exists a proper basic term $t'$ of \pACP\ 
such that $t = t'$ is derivable from the axioms of \pACP.
\end{proposition}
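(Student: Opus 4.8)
The plan is to prove the proposition by structural induction on $t$, which comes down to showing that the set $\cB$ of proper basic terms is, modulo derivable equality, closed under every operator of \pACP. The base cases are immediate: an action constant $a$ lies in $\cB^1 \subseteq \cB$ and the inaction constant $\dead$ lies in $\cB^0 \subseteq \cB$. For the inductive step the induction hypothesis lets me assume the immediate subterms are already proper basic terms, so it suffices to prove, for each operator, a closure lemma stating that applying the operator to arguments in $\cB$ yields a term derivably equal to one in $\cB$. I would formulate one such lemma for $\encap{H}$, one for $\seqc$, one for $\altc$, one for $\paltc{\pi}$, and a combined lemma for the three merge operators $\parc$, $\leftm$ and $\commm$.

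The closure lemmas for $\encap{H}$, $\seqc$, $\altc$ and $\paltc{\pi}$ are proved by induction on the structure of the basic-term arguments, following the familiar pattern of driving the operator inward through the probabilistic and alternative layers. For $\encap{H}$ one distributes over $\paltc{\pi}$, $\altc$ and $\seqc$ with pD, D3 and D4, reduces the atoms with D1 and D2, and absorbs with A6 and A7 the $\dead$ created when a blocked action occurs. For $\seqc$ one pushes the right-hand argument through the layers with pA4 and A4 and handles the atoms with A5 and A7. For $\altc$ one commutes $\altc$ past $\paltc{\pi}$ with pA5 and then merges alternative compositions at the $\cB^2$ level, using A6 to absorb $\dead$. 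For $\paltc{\pi}$ one disposes of the probabilities $0$ and $1$ with pA6 and pA1 and otherwise re-associates into the required right-nested form with pA2 and pA1. After each combination the side conditions $\summ$ and $\psumm$ in the definition of $\cB$ must be reinstated: duplicate probabilistic summands are collapsed by pA3, and duplicate action-prefixed summands by the derived idempotence $u = u \altc u$ for non-probabilistic $u$ noted in the remark after CM1' (obtainable from A3' and A4). This bookkeeping is routine but must be carried out with care.

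The crux of the proof, and the step I expect to be the main obstacle, is closure under parallel composition. The difficulty is threefold. First, axiom CM1' is conditional: it applies only to operands satisfying $x = x \altc x$, that is, to operands that are not initially probabilistic. I would therefore begin by using pCM1 and pCM2 to distribute $\parc$ over all outermost probabilistic choices, reducing to the case in which both operands are non-probabilistic (derivably equal to $\cB^2$ terms); by the remark after CM1' such terms satisfy $x = x \altc x$, so CM1' becomes applicable. Second, applying CM1' re-expresses $\parc$ through $\leftm$ and $\commm$, but the axioms CM3 and CM7 for these operators re-introduce $\parc$; hence $\parc$, $\leftm$ and $\commm$ must be treated by a single simultaneous induction rather than one after another. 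Third, the termination of this reduction must be secured by a well-founded measure: I would take the sum of the sizes of the two arguments and observe that, although CM1' itself does not decrease this measure, the ensuing applications of CM3 and CM7 reduce to $\parc$ applied to proper subterms of the original arguments, so that the recursive calls are on strictly smaller inputs. The remaining merge axioms---CM2, CM4--CM6 and CM8--CM12 for the atomic, distributive and inaction cases, and pCM3--pCM6 for the probabilistic case---together with the already established closure lemmas for $\altc$ and $\seqc$, then reassemble the outcomes into proper basic terms.
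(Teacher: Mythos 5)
Your proposal is correct and follows essentially the same route as the paper's proof: structural induction on $t$, reduced to per-operator closure claims for proper basic terms that are themselves proved by (nested) induction on the structure of the basic-term arguments, with $\parc$ handled via $\axiom{CM1'}$ through $\leftm$ and $\commm$. You are merely more explicit than the paper about two points it leaves implicit, namely discharging the condition of $\axiom{CM1'}$ after distributing over the probabilistic layer and justifying termination of the mutual recursion among $\parc$, $\leftm$ and $\commm$ induced by CM3 and CM7.
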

\begin{proof}
The proof is straightforward by induction on the structure of $t$.
The case where $t$ is of the form $\dead$ and the case where $t$ is of 
the form $a$ ($a \in \Act$) are trivial.
The case where $t$ is of the form $t_1 \seqc t_2$ follows immediately 
from the induction hypothesis (applied to $t_1$ and $t_2$) and the claim 
that, for all proper basic terms $t_1'$ and $t_2'$ of \pACP, there 
exists a proper basic term $t'$ of \pACP\ such that 
$t_1' \seqc t_2' = t'$ is derivable from the axioms of \pACP.
This claim is straightforwardly proved by induction on the structure of 
$t_1'$.
The cases where $t$ is of the form $t_1 \altc t_2$, 
$t_1 \paltc{\pi} t_2$, $t_1 \leftm t_2$, $t_1 \commm t_2$ or 
$\encap{H}(t_1)$ are proved in the same vein as the case where $t$ is of 
the form $t_1 \seqc t_2$.
In the case that $t$ is of the form $t_1 \commm t_2$, each of the cases 
to be considered in the inductive proof of the claim demands a (nested) 
proof by induction on the structure of $t_2'$.
The case that $t$ is of the form $t_1 \parc t_2$ follows immediately 
from the case that $t$ is of the form $t_1 \leftm t_2$ and the case that 
$t$ is of the form $t_1 \commm t_2$.
\qed
\end{proof}

\subsection{Guarded Recursion}
\label{subsect-recursion}

A closed \pACP\ term denotes a process with a finite upper bound to the 
number of actions that it can perform. 
Guarded recursion allows the description of processes without a finite 
upper bound to the number of actions that it can perform.

The current subsection applies to both \pACP\ and its extension \pACPpSI\ 
introduced in Section~\ref{sect-pSI}.  
Therefore, in the current subsection, let \PPA\ be \pACP\ or \pACPpSI. 

Let $t$ be a \PPA\ term containing a variable $X$.
Then an occurrence of $X$ in $t$ is \emph{guarded} if $t$ has a subterm 
of the form $a \seqc t'$ where $a \in \Act$ and $t'$ is a \PPA\ term 
containing this occurrence of $X$.
A \PPA\ term $t$ is a \emph{guarded} \PPA\ term if all occurrences of 
variables in $t$ are guarded.

A \emph{recursive specification} over \PPA\ is a set 
$\set{X_i = t_i \where i \in I}$, where $I$ is a finite or countably 
infinite set, each $X_i$ is a variable from $\cX$, each $t_i$ is a \PPA\ 
term in which only variables from $\set{X_i \where i \in I}$ occur, and 
$X_i \neq X_j$ for all $i,j \in I$ with $i \neq j$.
A recursive specification $\set{X_i = t_i \where i \in I}$ over \PPA\ is 
a \emph{guarded} recursive specification over \PPA\ if each $t_i$ is 
rewritable to a guarded \PPA\ term  using the axioms of \PPA\ in either 
direction and the equations in 
$\set{X_j = t_j \where j \in I \Land i \neq j}$ from left to right.

We write $\vars(E)$, where $E$ is a guarded recursive specification, for 
the set of all variables that occur in $E$.
The equations occurring in a guarded recursive specification are called
\emph{recursion equations}.

A solution of a guarded recursive specification $E$ in some model of 
\PPA\ is a set $\set{P_X \where X \in \vars(E)}$ of elements of the 
carrier of that model such that the equations of $E$ hold if, for all 
$X \in \vars(E)$, $X$ is assigned $P_X$.
We are only interested in models of \PPA\ in which guarded recursive 
specifications have unique solutions --- such as the model presented in
Section~\ref{subsect-semantics}.

We extend \PPA\ with guarded recursion by adding constants for solutions 
of guarded recursive specifications over \PPA\ and axioms concerning 
these additional constants.
For each guarded recursive specification $E$ over \PPA\ and each 
$X \in \vars(E)$, we add a constant standing for the unique solution of 
$E$ for $X$ to the constants of \PPA.
The constant standing for the unique solution of $E$ for $X$ is denoted 
by $\rec{X}{E}$.
We use the following notation.
Let $t$ be a \PPA\ term and $E$ be a guarded recursive specification 
over \PPA.
Then we write $\srec{t}{E}$ for $t$ with, for all $X \in \vars(E)$, all
occurrences of $X$ in $t$ replaced by $\rec{X}{E}$.
We add the equation RDP and the conditional equation RSP given in 
Table~\ref{axioms-recursion} to the axioms of \PPA.
\begin{table}[!t]
\caption{Axioms for guarded recursion}
\label{axioms-recursion}
\begin{eqntbl}
\begin{saxcol}
\rec{X}{E} = \srec{t}{E} & \mif X = t\; \in \;E         & \axiom{RDP} \\
E \Limpl X = \rec{X}{E}  & \mif X \in \vars(E)          & \axiom{RSP} 
\end{saxcol}
\end{eqntbl}
\end{table}
In RDP and RSP, $X$ stands for an arbitrary variable from $\cX$, $t$ 
stands for an arbitrary \PPA\ term, and $E$ stands for an arbitrary 
guarded recursive specification over \PPA.
Side conditions restrict what $X$, $t$ and $E$ stand for.
We write $\PPA_\mathrm{rec}$ for the resulting theory.

The equations $\rec{X}{E} = \srec{t}{E}$ for a fixed $E$ express that 
the constants $\rec{X}{E}$ make up a solution of $E$. 
The conditional equations $E \Limpl X = \rec{X}{E}$ express that this 
solution is the only one.

Because we have to deal with conditional equational formulas with an 
countably infinite number of premises in $\PPA_\mathrm{rec}$, it is 
understood that infinitary conditional equational logic is used in 
deriving equations from the axioms of $\PPA_\mathrm{rec}$. 
A complete inference system for infinitary conditional equational logic 
can be found in, for example, \cite{GV93}. 
It is noteworthy that in the case of infinitary conditional equational 
logic derivation trees may be infinitely branching (but they may not 
have infinite branches).

\subsection{Semantics of \pACP\ with Guarded Recursion}
\label{subsect-semantics}

In this subsection, we present a structural operational semantics of 
\pACPr\ and define a notion of bisimulation equivalence based on this 
semantics.

We start with the presentation of a structural operational semantics 
of \pACPr.
The following relations on closed \pACPr\ terms are used:
\begin{itemize}
\item 
for each $a \in \Act$, a unary relation $\aterm{}{a}$\,;
\item 
for each $a \in \Act$, a binary relation $\astep{}{a}{}$\,;
\item 
for each $\pi \in \Prob$, a binary relation $\pstep{}{\pi}{}$\,.
\end{itemize}
We write $\aterm{t}{a}$ for the assertion that $t \in {\aterm{}{a}}$, 
$\astep{t}{a}{t'}$ for the assertion that $(t,t') \in {\astep{}{a}{}}$, 
$\pstep{t}{\pi}{t'}$ for the assertion that 
$(t,t') \in {\pstep{}{\pi}{}}$, and
$\npstep{t}{t'}$ for the assertion that, for all 
$\pi \in \Prob \setminus \set{0}$, not $(t,t') \in {\pstep{}{\pi}{}}$.
These assertions can be explained as follows:
\begin{itemize}
\item
$\aterm{t}{a}$ indicates that 
$t$ can perform action $a$ and then terminate successfully;
\item
$\astep{t}{a}{t'}$ indicates that 
$t$ can perform action $a$ and then behave as $t'$;
\item
$\pstep{t}{\pi}{t'}$ indicates that
$t$ will behave as $t'$ with probability $\pi$;
\item
\smash{$\npstep{t}{t'}$} indicates that $t$ will not behave as $t'$ with 
a probability greater than zero.
\end{itemize}
The structural operational semantics of \pACPr\ is described by the
rules given in Tables~\ref{rules-pACPr-1} and~\ref{rules-pACPr-2}.
\begin{table}[!t]
\caption{Rules for the operational semantics of \pACPr\ (part 1)}
\label{rules-pACPr-1}
\begin{ruletbl}
\Rule
{}
{\aterm{a}{a}}
\qquad
\\
\Rule
{\aterm{x}{a},\; \pstep{y}{1}{y'}}
{\aterm{x \altc y}{a}}
\qquad
\Rule
{\pstep{x}{1}{x'},\; \aterm{y}{a}}
{\aterm{x \altc y}{a}}
\qquad
\Rule
{\astep{x}{a}{x'},\; \pstep{y}{1}{y'}}
{\astep{x \altc y}{a}{x'}}
\qquad
\Rule
{\pstep{x}{1}{x'},\; \astep{y}{a}{y'}}
{\astep{x \altc y}{a}{y'}}
\\
\Rule
{\aterm{x}{a}}
{\astep{x \seqc y}{a}{y}}
\qquad
\Rule
{\astep{x}{a}{x'}}
{\astep{x \seqc y}{a}{x' \seqc y}}
\\
\Rule
{\aterm{x}{a},\; \pstep{y}{1}{y'}}
{\astep{x \parc y}{a}{y}}
\qquad
\Rule
{\pstep{x}{1}{x'},\; \aterm{y}{a}}
{\astep{x \parc y}{a}{x}}
\qquad
\Rule
{\astep{x}{a}{x'},\; \pstep{y}{1}{y'}}
{\astep{x \parc y}{a}{x' \parc y}}
\qquad
\Rule
{\pstep{x}{1}{x'},\; \astep{y}{a}{y'}}
{\astep{x \parc y}{a}{x \parc y'}}
\\
\RuleC
{\aterm{x}{a},\; \aterm{y}{b}}
{\aterm{x \parc y}{\commf(a,b)}}
{\commf(a,b) \in \Act}
\qquad
\RuleC
{\aterm{x}{a},\; \astep{y}{b}{y'}}
{\astep{x \parc y}{\commf(a,b)}{y'}}
{\commf(a,b) \in \Act}
\\
\RuleC
{\astep{x}{a}{x'},\; \aterm{y}{b}}
{\astep{x \parc y}{\commf(a,b)}{x'}}
{\commf(a,b) \in \Act}
\qquad
\RuleC
{\astep{x}{a}{x'},\; \astep{y}{b}{y'}}
{\astep{x \parc y}{\commf(a,b)}{x' \parc y'}}
{\commf(a,b) \in \Act}
\\
\Rule
{\aterm{x}{a}}
{\astep{x \leftm y}{a}{y}}
\qquad
\Rule
{\astep{x}{a}{x'}}
{\astep{x \leftm y}{a}{x' \parc y}}
\\
\RuleC
{\aterm{x}{a},\; \aterm{y}{b}}
{\aterm{x \commm y}{\commf(a,b)}}
{\commf(a,b) \in \Act}
\qquad
\RuleC
{\aterm{x}{a},\; \astep{y}{b}{y'}}
{\astep{x \commm y}{\commf(a,b)}{y'}}
{\commf(a,b) \in \Act}
\\
\RuleC
{\astep{x}{a}{x'},\; \aterm{y}{b}}
{\astep{x \commm y}{\commf(a,b)}{x'}}
{\commf(a,b) \in \Act}
\qquad
\RuleC
{\astep{x}{a}{x'},\; \astep{y}{b}{y'}}
{\astep{x \commm y}{\commf(a,b)}{x' \parc y'}}
{\commf(a,b) \in \Act}
\\
\RuleC
{\aterm{x}{a}}
{\aterm{\encap{H}(x)}{a}}
{a \not\in H}
\qquad
\RuleC
{\astep{x}{a}{x'}}
{\astep{\encap{H}(x)}{a}{\encap{H}(x')}}
{a \not\in H}
\\
\RuleC
{\aterm{\srec{t}{E}}{a}}
{\aterm{\rec{X}{E}}{a}}
{X = t\; \in \;E}
\qquad
\RuleC
{\astep{\srec{t}{E}}{a}{x'}}
{\astep{\rec{X}{E}}{a}{x'}}
{X = t\; \in \;E}
\end{ruletbl}
\end{table}
\begin{table}[!t]
\caption{Rules for the operational semantics of \pACPr\ (part 2)}
\label{rules-pACPr-2}
\begin{ruletbl}
\Rule
{}
{\pstep{a}{1}{a}}
\qquad
\Rule
{}
{\pstep{\dead}{1}{\dead}}
\\
\Rule
{\pstep{x}{\pi}{x'},\; \pstep{y}{\rho}{y'}}
{\pstep{x \altc y}{\pi \mul \rho}{x' \altc y'}}
\qquad
\Rule
{\pstep{x}{\pi}{x'}}
{\pstep{x \seqc y}{\pi}{x' \seqc y}}
\qquad
\Rule
{\pstep{x}{\rho}{z},\; \pstep{y}{\rho'}{z}}
{\pstep{x \paltc{\pi} y}{\pi \mul \rho + (1-\pi) \mul \rho'}{z}}
\\
\Rule
{\pstep{x}{\pi}{x'},\; \pstep{y}{\rho}{y'}}
{\pstep{x \parc y}{\pi \mul \rho}{x' \parc y'}}
\qquad
\Rule
{\pstep{x}{\pi}{x'},\; \pstep{y}{\rho}{y'}}
{\pstep{x \leftm y}{\pi \mul \rho}{x' \leftm y'}}
\qquad
\Rule
{\pstep{x}{\pi}{x'},\; \pstep{y}{\rho}{y'}}
{\pstep{x \commm y}{\pi \mul \rho}{x' \commm y'}}
\\
\Rule
{\pstep{x}{\pi}{x'}}
{\pstep{\encap{H}(x)}{\pi}{\encap{H}(x')}}
\qquad
\RuleC
{\pstep{\srec{t}{E}}{\pi}{z}}
{\pstep{\rec{X}{E}}{\pi}{z}}
{X = t\; \in \;E}
\\
\Rule
{\npstep{x}{x'}}
{\pstep{x}{0}{x'}}
\end{ruletbl}
\end{table}
The rules in Table~\ref{rules-pACPr-1} describe the relations 
$\aterm{}{a}$ and the relations $\astep{}{a}{}$ and the rules in 
Table~\ref{rules-pACPr-2} describe the relations $\pstep{}{\pi}{}$.
In these tables, $a$ and $b$ stand for arbitrary actions from $\Act$, 
$\pi$, $\rho$, and $\rho'$ stand for arbitrary probabilities from 
$\Prob$,
$X$ stands for an arbitrary variable from $\cX$, 
$t$ stands for an arbitrary \pACP\ term, and 
$E$ stands for an arbitrary guarded recursive specification over \pACP.

We could have excluded the relation $\pstep{}{0}{}$ and by that obviated
the need for the last rule in Table~\ref{rules-pACPr-2}.
In that case, however, 11 additional rules concerning the relations 
$\pstep{}{\pi}{}$, all with negative premises, would be needed instead.

Notice that, if $t$ is not derivably equal to a term whose outermost 
operator is a probabilistic choice operator, then $t$ can only behave as 
itself and consequently we have that $\pstep{t}{1}{t}$ and 
$\pstep{t}{0}{t'}$ for each term $t'$ other than $t$.

The next two propositions express properties of the relations $\pstep{}{\pi}{}$.
\begin{proposition}
\label{prop-pstep-1}
For all closed \pACPr\ terms $t$ and $t'$, $\pstep{t}{1}{t'}$ only if 
$t \equiv t'$.
\end{proposition}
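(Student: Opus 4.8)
The plan is to proceed by induction on the height of the derivation of $\pstep{t}{1}{t'}$, which reduces to a case distinction on the last rule of Table~\ref{rules-pACPr-2} that is applied (equivalently, on the outermost operator of $t$). The induction will repeatedly invoke two arithmetical facts about the set $\Prob$ of probabilities of $\fM$: that $\pi \mul \rho = 1$ with $\pi, \rho \in \Prob$ forces $\pi = \rho = 1$, and that $\pi_0 \mul \rho + (1 - \pi_0) \mul \rho' = 1$ with $\pi_0, \rho, \rho' \in \Prob$ forces label $1$ on each summand carrying positive weight (so $\rho = \rho' = 1$ when $0 < \pi_0 < 1$, and $\rho = 1$ resp.\ $\rho' = 1$ when $\pi_0 = 1$ resp.\ $\pi_0 = 0$). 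I would prove both as a preliminary lemma, reading them off from $0 \leq \pi, \rho \leq 1$ and the ordered-field behaviour of a nontrivial signed cancellation meadow, with the degenerate meadow (where $0 = 1$, so all probabilities coincide) treated separately.

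The final rule of Table~\ref{rules-pACPr-2}, deriving $\pstep{x}{0}{x'}$ from $\npstep{x}{x'}$, carries label $0$ and so can never conclude a transition labelled $1$; it is discarded immediately. For $t \equiv a$ and $t \equiv \dead$ only the corresponding axiom applies, giving $t' \equiv t$. For $t \equiv t_1 \seqc t_2$ and $t \equiv \encap{H}(t_1)$ the label is transmitted unchanged, so the unique premise is again a transition labelled $1$ and the induction hypothesis yields $t_1 \equiv t_1'$, whence $t \equiv t'$ after reassembly. For $t \equiv t_1 \altc t_2$, $t_1 \parc t_2$, $t_1 \leftm t_2$, and $t_1 \commm t_2$ the label is a product $\pi \mul \rho$; the first arithmetical fact forces $\pi = \rho = 1$, the induction hypothesis applied to both premises gives $t_1 \equiv t_1'$ and $t_2 \equiv t_2'$, and the resolved components recombine into exactly $t$.

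The two remaining cases are where the difficulty concentrates. For $t \equiv t_1 \paltc{\pi_0} t_2$ the applicable rule reads a label $\pi_0 \mul \rho + (1 - \pi_0) \mul \rho'$ off premises $\pstep{t_1}{\rho}{z}$ and $\pstep{t_2}{\rho'}{z}$ that share a common target $z$; the second arithmetical fact turns ``label $= 1$'' into label $1$ on the positively weighted premises, and the induction hypothesis then identifies $z$ with the corresponding branch(es), giving $z \equiv t_1$ (and, when $0 < \pi_0 < 1$, also $z \equiv t_2$). For $t \equiv \rec{X}{E}$ the rule passes label $1$ to the unfolding, so the induction hypothesis applied to the strict subderivation of $\pstep{\srec{s}{E}}{1}{z}$ (for the recursion equation $X = s \in E$) gives $z \equiv \srec{s}{E}$. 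In both cases the remaining task is to show that this common target $z$ is syntactically the source itself.

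This closing step is the main obstacle. The induction as set up delivers the target only up to the shape of a resolved branch --- $z \equiv t_1$ in the probabilistic-choice case, $z \equiv \srec{s}{E}$ in the recursion case --- and bridging from there to the literal source requires sharper control over which terms can be reached by a probability-$1$ transition. I would address this by strengthening the statement carried through the induction, adding an invariant that characterises the possible targets of $\pstep{\cdot}{\cdot}{\cdot}$ tightly enough that a probabilistic-choice term and a recursion constant admit a probability-$1$ transition only to a syntactically matching term, and then feeding this sharper invariant back into the two hard cases. Formulating that invariant so that it is simultaneously strong enough to force syntactic identity and preserved by every rule of Table~\ref{rules-pACPr-2} is, I expect, the crux of the whole argument.
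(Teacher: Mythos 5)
Your case analysis is accurate, and your instinct that the closing step is the crux is sharper than you give it credit for: the strengthened invariant you propose to hunt for does not exist, because the proposition as literally stated (with $\equiv$ as syntactic identity) is false, and your own analysis already contains the refutation. In the probabilistic-choice case you correctly derived that a label-$1$ transition forces label $1$ on every positively weighted premise; now instantiate. Taking $t_1 \equiv t_2 \equiv a$, the rule for $\paltc{\pi}$ with both premises the axiom $\pstep{a}{1}{a}$ yields $\pstep{a \paltc{1/2} a}{1}{a}$ (label $1/2 \mul 1 + 1/2 \mul 1 = 1$), and $a \paltc{1/2} a \not\equiv a$. Even simpler, since the paper includes $\paltc{1}$: from $\pstep{a}{1}{a}$ and $\pstep{b}{0}{a}$ (the latter by the catch-all rule, since $\npstep{b}{a}$) one derives $\pstep{a \paltc{1} b}{1}{a}$, and $a \paltc{1} b \not\equiv a$. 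The recursion case fails likewise: for $E = \set{X = a \seqc X}$, the premise $\pstep{a \seqc \rec{X}{E}}{1}{a \seqc \rec{X}{E}}$ gives $\pstep{\rec{X}{E}}{1}{a \seqc \rec{X}{E}}$, with source and target syntactically distinct. These counterexamples moreover propagate upward through the $\seqc$, $\altc$, $\parc$ cases you handled (e.g.\ $\pstep{(a \paltc{1} b) \seqc c}{1}{a \seqc c}$), so no property preserved by the rules of Table~\ref{rules-pACPr-2} can force syntactic identity of source and target. The right move at the point where you stopped was not to strengthen the induction hypothesis but to recognize the obstacle as a refutation.

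For comparison: the paper's entire proof of Proposition~\ref{prop-pstep-1} is the single sentence that the claim ``is easy to prove by induction on the structure of $t$'', which is silent on exactly the two cases you isolated, and that structural induction breaks in the same places your height induction does. Indeed, $\pstep{t \paltc{1/2} t}{1}{t}$ is precisely the behaviour the paper engineers its combined-premise rule to produce (Section~\ref{subsect-remarks}, to repair the $1/2$-anomaly and stay consistent with pA3), so the paper's own design contradicts the literal statement. What is true, provable by exactly the machinery you set up (your two arithmetical facts about $\Prob$ plus induction on derivations), and sufficient for the paper's subsequent uses, is a weakened conclusion: if $\pstep{t}{1}{t'}$ then $P(t,t') = 1$ and $P(t,t'') = 0$ for all $t'' \not\equiv t'$, i.e.\ the distribution of $t$ is concentrated on $t'$ (this also follows from Proposition~\ref{prop-pdf}); alternatively, the conclusion can be weakened to derivable equality of $t$ and $t'$, which covers $a \paltc{1/2} a = a$ (pA3), $a \paltc{1} b = a$ (pA6), and $\rec{X}{E} = a \seqc \rec{X}{E}$ (RDP). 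So the verdict is: your proof is incomplete as you say, but the gap is in the proposition, not in your technique, and the repair is to weaken the statement rather than to strengthen the invariant.
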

\begin{proof}
This is easy to prove by induction on the structure of $t$.
\qed
\end{proof}
\begin{proposition}
\label{prop-pstep-2}
For all closed \pACPr\ terms $t$ and $t'$, there exists a 
$\pi \in \Prob$ such that $\pstep{t}{\pi}{t'}$.
\end{proposition}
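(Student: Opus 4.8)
The plan is to reduce the statement to a single application of the catch-all rule at the bottom of Table~\ref{rules-pACPr-2}, the rule that derives $\pstep{x}{0}{x'}$ from the negative premise $\npstep{x}{x'}$. First I would fix arbitrary closed \pACPr\ terms $t$ and $t'$ and split into two cases according to whether $t$ behaves as $t'$ with some positive probability.

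In the first case, there is a $\pi \in \Prob \setminus \set{0}$ with $\pstep{t}{\pi}{t'}$, and this $\pi$ witnesses the claim immediately. In the second case, not $\pstep{t}{\pi}{t'}$ holds for every $\pi \in \Prob \setminus \set{0}$, which is exactly the assertion $\npstep{t}{t'}$. The last rule of Table~\ref{rules-pACPr-2} then yields $\pstep{t}{0}{t'}$; since $0 \leq 0 \leq 1$, the characterization of $\Prob$ recalled in Section~\ref{sect-meadows} gives $0 \in \Prob$, so the claim holds with $\pi = 0$.

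These two cases are exhaustive by the law of excluded middle, so no induction on the structure of $t$ is needed --- in contrast to the proof of Proposition~\ref{prop-pstep-1}. The only thing worth checking carefully is that the premise $\npstep{t}{t'}$ of the catch-all rule is precisely the negation of the condition defining the first case, so that the rule fires exactly when no positive-probability transition is available. Consequently a suitable $\pi \in \Prob$ exists in every case, which is what had to be shown. I expect this last observation to be the only genuine obstacle, and it is a minor one; the proposition is essentially a direct consequence of the presence of the catch-all rule.
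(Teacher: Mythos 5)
Your argument is correct, and it takes a genuinely more direct route than the paper, which disposes of this proposition with a one-line ``easy to prove by induction on the structure of $t$''. Your observation is that no induction is needed at all: for any pair $(t,t')$, either some $\pi \in \Prob \setminus \set{0}$ with $\pstep{t}{\pi}{t'}$ exists, or the negative premise $\npstep{t}{t'}$ of the catch-all rule holds by definition and that rule supplies $\pstep{t}{0}{t'}$. This matches the author's own stated intent --- Section~\ref{subsect-remarks} says explicitly that the inclusion of the relation $\pstep{}{0}{}$ and this rule is what ``achieves that for all pairs $(t,t')$ \ldots there exists a $\pi \in \Prob$'' --- and it has the added benefit of handling the constants $\rec{X}{E}$ uniformly, which a literal structural induction treats awkwardly since $\srec{t}{E}$ is not a subterm of $\rec{X}{E}$. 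The one point your excluded-middle step quietly presupposes is that the transition relations specified by a rule system containing a negative premise are well defined, i.e.\ that for each pair $(t,t')$ it is determinate whether a positive-probability transition is derivable; this is the standard issue with negative premises in structural operational semantics (resolvable here by stratifying on the probability being zero or not, together with term structure), and it is the one place where the paper's structural induction would do real work, by showing that the positive-probability transitions of $t$ are fixed by the positive rules applied to $t$'s structure before the catch-all rule is consulted. Since the paper itself never addresses this meta-level issue, your proof is at least as rigorous as the one it replaces; if you want it airtight, add a sentence justifying why the dichotomy is determinate.
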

\begin{proof}
This is easy to prove by induction on the structure of $t$.
\qed
\end{proof}

We define a probability distribution function $P$ from the set of all 
pairs of closed \pACPr\ terms to $\Prob$ as follows:
\pagebreak[2]
\begin{ldispl}
P(t,t') = {\displaystyle \sum_{\pi \in \Pi(t,t')}} \!\pi \;,
\quad\mathrm{where}\;\; 
\Pi(t,t') = \set{\pi \where \pstep{t}{\pi}{t'}}\;.
\end{ldispl}%
This function can be explained as follows:
$P(t,t')$ is the total probability that $t$ will behave as $t'$.

We write $P(t,T)$, where $t$ is a closed \pACPr\ term and $T$ is a set
of closed \pACPr\ terms, for $\sum_{t' \in T} P(t,t')$.

The well-definedness of $P$ is a corollary of 
Proposition~\ref{prop-pstep-2}.
\begin{corollary}
\label{corol-well-def-pdf}
For all closed \pACPr\ terms $t$ and $t'$, there exists a unique
$\pi \in \Prob$ such that $P(t,t') = \pi$.
\end{corollary}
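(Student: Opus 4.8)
The plan is to obtain the corollary from the stronger fact that, for all closed \pACPr\ terms $t$ and $t'$, the set $\Pi(t,t') = \set{\pi \where \pstep{t}{\pi}{t'}}$ is a singleton. Granting this, $P(t,t')$ is just the unique element of $\Pi(t,t')$, which lies in $\Prob$ because every relation $\pstep{}{\pi}{}$ is by construction indexed by a probability $\pi \in \Prob$; the existence and uniqueness of a $\pi \in \Prob$ with $P(t,t') = \pi$ are then immediate. Since Proposition~\ref{prop-pstep-2} already yields $\Pi(t,t') \neq \emptyset$, the remaining obligation is the at-most-one property: $\pstep{t}{\pi}{t'}$ and $\pstep{t}{\pi'}{t'}$ imply $\pi = \pi'$.

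First I would note that a positive-probability step and the last rule of Table~\ref{rules-pACPr-2} can never both apply to one pair, since that rule fires only under the premise $\npstep{t}{t'}$, that is, only when $t$ has no step to $t'$ with a probability in $\Prob \diff \set{0}$. Hence $0 \in \Pi(t,t')$ precisely when $t$ admits no positive step to $t'$, and then $\Pi(t,t') = \set{0}$. It therefore suffices to prove that $t$ has at most one positive-probability step to each fixed $t'$, which I would do by induction on the structure of $t$.

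The engine of the induction is that the rules in Table~\ref{rules-pACPr-2} are syntax-directed by the outermost operator of $t$, so for each form of $t$ exactly one structural rule can contribute a step, and it reads its probability off the immediate subterms. A target of $x \altc y$ must have the shape $x' \altc y'$, which fixes $x'$ and $y'$, so its only probability is the single value $P(x,x') \mul P(y,y')$ supplied by the induction hypothesis; a target $z$ of $x \paltc{\pi} y$ carries the single probability $\pi \mul P(x,z) + (1-\pi) \mul P(y,z)$; and the cases for $\seqc$, $\parc$, $\leftm$, $\commm$ and $\encap{H}$ are analogous, each fixing a unique value from the induction hypothesis and the syntactically forced decomposition of the target. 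When this value is $0$ the structural rule yields a $0$-step coinciding with the one the last rule may contribute, and when it is positive the last rule cannot fire, so a fixed target carries exactly one probability. The base cases $t \equiv a$ and $t \equiv \dead$ are read directly from the first two rules of Table~\ref{rules-pACPr-2}, consistently with Proposition~\ref{prop-pstep-1}.

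The step I expect to be the main obstacle is the recursion case $t \equiv \rec{X}{E}$, where the recursion rule only gives $\Pi(\rec{X}{E},t') = \Pi(\srec{s}{E},t')$ for $X = s \in E$, and $\srec{s}{E}$ is not a structural subterm of $\rec{X}{E}$, so the induction hypothesis is unavailable as stated. The crux here is guardedness: it guarantees that unfolding $\rec{X}{E}$ to $\srec{s}{E}$ exposes each recursion constant $\rec{Y}{E}$ only within a subterm of the form $a \seqc s'$ with $a \in \Act$, where the $\seqc$-rule resolves the probabilistic behaviour through the head action $a$ alone --- giving the probability-$1$ self-step from the axiom $\pstep{a}{1}{a}$ --- without ever re-entering the recursion rule. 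This well-foundedness lets me run an induction on the structure of the fixed finite body $s$, in which guarded recursion constants act as inert leaves, so that the at-most-one argument of the previous paragraph applies unchanged. Combined with Proposition~\ref{prop-pstep-2}, this establishes that $\Pi(t,t')$ is a singleton, and the corollary follows.
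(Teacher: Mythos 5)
Your proposal is correct in its approach and in fact does substantially more than the paper, which offers no argument at all: the text simply declares the corollary to be a consequence of Proposition~\ref{prop-pstep-2}. That proposition only yields $\Pi(t,t') \neq \emptyset$, which by itself does not place $P(t,t') = \sum_{\pi \in \Pi(t,t')} \pi$ inside $\Prob$ --- if $\Pi(t,t')$ could contain two distinct positive probabilities the sum could exceed $1$. You correctly identify the at-most-one property as the real content and prove it by the same kind of structural induction that the paper uses (without detail) for Propositions~\ref{prop-pstep-1}, \ref{prop-pstep-2} and~\ref{prop-pdf}: the rules of Table~\ref{rules-pACPr-2} are syntax-directed, the target of each structural rule determines the targets of its premises, and the rule with premise $\npstep{x}{x'}$ is mutually exclusive with any positive step. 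Your stronger claim that $\Pi(t,t')$ is a singleton is true, and the corollary follows from it together with Proposition~\ref{prop-pstep-2}; this is the justification the paper leaves implicit.

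One soft spot, which you partly anticipate: in the recursion case your induction treats the body $s$ of an equation $X = s$ in $E$ as a guarded term in which recursion constants occur only under action prefixes. The definition of a guarded recursive specification only requires $s$ to be \emph{rewritable} to a guarded term (for instance $E = \set{X = Y,\, Y = a \seqc X}$ is guarded although the body of $X$ is the bare variable $Y$), so a single unfolding need not expose every recursion constant under a guard, and an induction on the structure of the literal body does not immediately go through. This is repaired by a well-founded induction on the number of unfoldings needed to reach a guarded term, which guardedness keeps finite; your key observation --- that once a recursion constant sits inside a subterm of the form $a \seqc t''$, the rule for $\seqc$ resolves the probabilistic step via $\pstep{a}{1}{a}$ without re-entering the recursion rule --- is exactly what makes that induction terminate. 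With that adjustment the argument is complete.
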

Moreover, $P$ is actually a probability distribution function.
\begin{proposition}
\label{prop-pdf}
Let $T$ be the set of all closed \pACPr\ terms.
Then, for all closed \pACPr\ terms $t$, $P(t,T) = 1$.
\end{proposition}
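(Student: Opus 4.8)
The plan is to prove the proposition by induction on the structure of the closed \pACPr\ term $t$, after strengthening it to an invariant that makes the induction go through. The strengthened claim I would prove is: for every closed \pACPr\ term $t$, (i) the set $\set{t' \where P(t,t') \neq 0}$ is finite; (ii) for each $t'$ the set $\Pi(t,t')$ is a singleton, so that $P(t,t')$ is just the unique $\pi$ with $\pstep{t}{\pi}{t'}$; and (iii) the (finite, by~(i)) sum $\sum_{t'} P(t,t')$ equals $1$. Part~(i) is needed first of all to make sense of $P(t,T) = \sum_{t' \in T} P(t,t')$: since $T$ is infinite but all but finitely many summands vanish, this sum collapses to the finite sum in~(iii), and then $P(t,T) = 1$ is exactly~(iii). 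For~(ii) the point is that the only rule that can produce a transition labelled $0$ is the last rule of Table~\ref{rules-pACPr-2}, and by its premise it fires for a pair $(t,t')$ precisely when no other rule gives that pair a positive probability; hence (ii) reduces to checking that the remaining rules assign at most one probability to each pair, which the induction establishes alongside~(iii). Existence of at least one $\pi$ is Proposition~\ref{prop-pstep-2}.

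For the base cases $t \equiv a$ and $t \equiv \dead$ the only probabilistic transition is $\pstep{t}{1}{t}$, so the distribution is the point mass at $t$ and (i)--(iii) hold trivially. The inductive cases I expect to be short computations resting on unique readability of terms and the induction hypothesis. For $t \equiv t_1 \seqc t_2$ and $t \equiv \encap{H}(t_1)$ the relevant rule pushes the distribution of $t_1$ forward along an injective syntactic map ($t_1' \mapsto t_1' \seqc t_2$, respectively $t_1' \mapsto \encap{H}(t_1')$), so the total mass is unchanged and equals $1$ by the induction hypothesis. For $t \equiv t_1 \altc t_2$, $t \equiv t_1 \parc t_2$, $t \equiv t_1 \leftm t_2$ and $t \equiv t_1 \commm t_2$, writing $\odot$ for the binary operator at hand, the relevant rule gives $\pstep{t}{\pi \mul \rho}{t_1' \odot t_2'}$ from $\pstep{t_1}{\pi}{t_1'}$ and $\pstep{t_2}{\rho}{t_2'}$, so the distribution of $t$ is the product of those of $t_1$ and $t_2$ (unique readability makes the decomposition $t_1' \odot t_2'$ of each target unique), whence the total mass factorises as $(\sum_{t_1'} P(t_1,t_1')) \mul (\sum_{t_2'} P(t_2,t_2')) = 1 \mul 1 = 1$. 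For $t \equiv t_1 \paltc{\pi} t_2$ the probabilistic-choice rule together with~(ii) for $t_1$ and $t_2$ gives $P(t,t') = \pi \mul P(t_1,t') + (1 - \pi) \mul P(t_2,t')$ for every $t'$, and summing over the finite union of the two supports yields $\pi \mul 1 + (1 - \pi) \mul 1 = 1$ by the ring axioms of the meadow.

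The case $t \equiv \rec{X}{E}$, with $X = t'' \in E$, is the one I expect to be the real obstacle, because $\rec{X}{E}$ is not structurally smaller than the term $\srec{t''}{E}$ to which its behaviour is reduced. By the recursion rule of Table~\ref{rules-pACPr-2} (and the matching behaviour of the last rule), $\pstep{\rec{X}{E}}{\pi}{z}$ holds exactly when $\pstep{\srec{t''}{E}}{\pi}{z}$ does, so the two terms have literally the same initial distribution, and it suffices to establish (i)--(iii) for $\srec{t''}{E}$. Here guardedness is essential: because $E$ is a guarded recursive specification, every occurrence of a recursion constant in (a guarded form of) $\srec{t''}{E}$ lies inside a subterm of the form $a \seqc u$, and none of the probabilistic rules descend past such an action prefix, since the rule for $\seqc$ only unfolds the left operand, which is the action $a$. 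Consequently the computation of the initial distribution of $\srec{t''}{E}$ never opens a recursion constant, and the rewriting-to-guarded condition in the definition of a guarded recursive specification guarantees that the finite chain of reductions through recursion constants terminates at action-prefixed or atomic subterms.

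To make the recursion case precise I would carry out the structural induction relative to a well-founded measure that counts the operator symbols lying strictly above the action-prefix/atomic level, with each recursion-unfolding step justified by guardedness; modulo that measure, the recursion case reduces to the operator cases already treated. The only non-syntactic ingredient is the finite arithmetic in the meadow $\fM$, which is unproblematic: the identities used (associativity and commutativity of $+$ and $\mul$, distributivity, and $\pi \mul 1 + (1 - \pi) \mul 1 = 1$) are all instances of the equations in Table~\ref{eqns-meadow}, and every sum that occurs is finite by~(i). Thus the genuine work is confined to the unique-readability arguments that underlie the product and push-forward cases and to the well-founded handling of guarded recursion.
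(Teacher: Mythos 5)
Your proposal takes essentially the same route as the paper, whose entire proof reads ``This is easy to prove by induction on the structure of $t$''; your strengthened invariant (finite support, $\Pi(t,t')$ a singleton, total mass $1$) and the well-founded, guardedness-based handling of the $\rec{X}{E}$ case are precisely the details such an induction needs. One minor inaccuracy: the product rules for $\altc$, $\parc$, $\leftm$, $\commm$ and the rule for $\paltc{\pi}$ can also derive transitions labelled $0$ when a premise carries probability $0$, so the last rule of Table~\ref{rules-pACPr-2} is not the only source of $0$-labelled transitions --- but this does not harm your argument, since those contributions coincide with the $0$ the last rule would add and $\Pi(t,t')$ remains a singleton.
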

\begin{proof}
This is easy to prove by induction on the structure of $t$.
\qed
\end{proof}

It follows from Propositions~\ref{prop-pstep-1} 
and~\ref{prop-pdf} that the behaviour of $t$ does not start 
with a probabilistic choice if $\pstep{t}{1}{t'}$.
This explains the premises $\pstep{x}{1}{x'}$ and $\pstep{y}{1}{y'}$ in
Table~\ref{rules-pACPr-1}: they guarantee that probabilistic choices are 
always resolved before choices involved in alternative composition and 
parallel composition are resolved.

The relations used in an operational semantics are often called 
transition relations.
It is questionable whether the relations $\pstep{}{\pi}{}$ deserve this 
name.
Recall that $\pstep{t}{\pi}{t'}$ means that $t$ will behave as $t'$ with 
probability $\pi$.
It is rather far-fetched to suppose that a transition from $t$ to $t'$ 
has taken place at the time that $t$ starts to behave as $t'$.
The relations $\pstep{}{\pi}{}$ primarily constitute a representation of 
the probability distribution function $P$ defined above.
This representation turns out to be a convenient one in the setting of 
structural operational semantics.

In the next paragraph, we write $[t]_R$, where $t$ is a closed \pACPr\ 
term and $R$ is an equivalence relation on closed \pACPr\ terms, for the 
equivalence class of $t$ with respect to $R$.

A \emph{probabilistic bisimulation} is an equivalence relation $R$ on 
closed \pACPr\ terms such that, for all closed \pACPr\ terms $t_1,t_2$ 
with $R(t_1,t_2)$, the following conditions hold: 
\begin{itemize}
\item
if $\astep{t_1}{a}{t_1'}$ for some closed \pACPr\ term $t_1'$ and 
$a \in \Act$, then there exists a closed \pACPr\ term $t_2'$ such that 
$\astep{t_2}{a}{t_2'}$ and $R(t_1',t_2')$;
\item
if $\aterm{t_1}{a}$ for some $a \in \Act$, then $\aterm{t_2}{a}$;
\item
$P(t_1,[t]_R) = P(t_2,[t]_R)$ for all closed \pACPr\ terms $t$.
\end{itemize}%
Two closed \pACPr\ terms $t_1,t_2$ are 
\emph{probabilistic bisimulation equivalent}, written $t_1 \bisim t_2$, 
if there exists a probabilistic bisimulation $R$ such that $R(t_1,t_2)$.
Let $R$ be a probabilistic bisimulation such that $R(t_1,t_2)$.
Then we say that $R$ is a probabilistic bisimulation \emph{witnessing}
$t_1 \bisim t_2$.

The next two propositions state some useful results about 
${} \bisim {}$.
\begin{proposition}
\label{prop-bisim-pstep-one}
For all closed \pACPr\ terms $t$, $t \bisim t \altc t$ only if 
$\pstep{t}{1}{t}$.
\end{proposition}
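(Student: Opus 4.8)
The plan is to argue the contrapositive by contradiction: assume towards a contradiction that $t \bisim t \altc t$ while $\pstep{t}{1}{t}$ fails, and derive $\pstep{t}{1}{t}$. I first record two structural facts, each provable by a routine induction on the structure of closed \pACPr\ terms from the rules of Tables~\ref{rules-pACPr-2} and~\ref{rules-pACPr-1}. First, every positive-probability successor is \emph{stable}: if $\pstep{t}{\pi}{u}$ with $\pi \neq 0$, then $\pstep{u}{1}{u}$, so by Proposition~\ref{prop-pstep-1} such a $u$ is the only term it reaches with probability $1$. Second, a term for which $\pstep{t}{1}{t}$ fails admits no rule of Table~\ref{rules-pACPr-1} at top level, that is, neither $\aterm{t}{a}$ nor $\astep{t}{a}{t'}$ holds for any $a \in \Act$, because every action rule carries a premise of the form $\pstep{\cdot}{1}{\cdot}$ that forces the relevant argument to be stable. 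Consequently both $t$ and $t \altc t$ lack immediate action and termination behaviour, the first two clauses in the definition of probabilistic bisimulation are matched vacuously, and the whole burden rests on the third clause.

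Next I would establish a product formula for $\altc$. Since the outermost operator of $t \altc t$ is $\altc$, the only rule of Table~\ref{rules-pACPr-2} yielding $\pstep{t \altc t}{\sigma}{z}$ with $\sigma \neq 0$ is the alternative-composition rule, whose target necessarily has the shape $u \altc v$. Summing the contributing probabilities and factoring the two independent sums (using Corollary~\ref{corol-well-def-pdf}) gives $P(t \altc t, u \altc v) = P(t,u) \mul P(t,v)$ for all closed $u,v$, whereas $P(t \altc t, s) = 0$ whenever $s$ is not of the form $u \altc v$. Thus the successor distribution of $t \altc t$ is the product of that of $t$ with itself, carried by pairs of stable successors of $t$; the support is nonempty by Proposition~\ref{prop-pdf} and finite, since a closed \pACPr\ term has only finitely many positive-probability successors.

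I would then exploit the witnessing bisimulation $R$. After proving, as lemmas, that $\bisim$ is a congruence for $\altc$ and that $u \altc u \bisim u$ for every stable $u$, I can define a commutative, idempotent operation $\oplus$ on the finitely many $R$-classes meeting the support of $P(t,\cdot)$ by $[u]_R \oplus [v]_R = [u \altc v]_R$. The third bisimulation clause combined with the product formula then reads $p_{C} = \sum_{C' \oplus C'' = C} p_{C'} \mul p_{C''}$ for each such class $C$, where $p_{C} = P(t,C)$; in particular no combination may leave the support, as that would give a class with positive $t \altc t$-mass but zero $t$-mass. Ordering the classes by $C' \leq C'' \Liff C' \oplus C'' = C''$ and choosing a minimal class $C^*$, every pair with $C' \oplus C'' = C^*$ must satisfy $C' = C'' = C^*$, so the equation collapses to $p_{C^*} = p_{C^*} \mul p_{C^*}$. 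By the cancellation axiom this forces $p_{C^*} \in \set{0,1}$, and since $C^*$ meets the support, $p_{C^*} = 1$; hence there is only one such class, and it carries all of the probability.

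The main obstacle is the final descent from this degenerate situation --- all positive-probability successors of $t$ lying in a single $R$-class of mass $1$ --- to $\pstep{t}{1}{t}$, which contradicts the assumption and closes the argument. This is exactly the point where the probabilistic fine structure matters rather than the mere distribution over classes, and where Proposition~\ref{prop-pstep-1} is essential: it pins down that the only term $t$ can reach with probability $1$ is $t$ itself, so a genuinely probabilistic top-level choice cannot be collapsed away. I expect to close this step by showing that the single class must in fact reduce to $\set{t}$ with $t$ stable, using Proposition~\ref{prop-pstep-1} together with an analysis of how the product distribution of $t \altc t$ re-enters that class; checking that this inference is watertight, and is not defeated by successors that are mutually bisimilar without being syntactically equal, is the delicate part of the proof.
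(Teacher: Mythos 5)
Your argument up to the conclusion that the successor distribution of $t$ is concentrated on a single $R$-class of mass $1$ is essentially sound, and it is a far more detailed elaboration of what the paper compresses into one sentence: the paper's entire proof is an appeal to the operational rules together with the fact that $\pi \mul \pi = 1$ iff $\pi = 1$, which is precisely the fact you invoke at the step $p_{C^*} = p_{C^*} \mul p_{C^*}$. Two repairable points along the way: for $\oplus$ to be well defined on $R$-classes you should take $R$ to be ${\bisim}$ itself (licensed by Proposition~\ref{prop-bisim-max}) rather than an arbitrary witness, and the transitivity of your order on classes needs associativity of $\oplus$, i.e.\ soundness of A2 for $\bisim$, which is not among your stated lemmas.

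The final descent, however, is a genuine gap, and the worry you voice at the end is exactly on target: the step is defeated by successors that are mutually bisimilar without being syntactically equal. Take $t \equiv a \paltc{1/2} (a \altc a)$. Its positive-probability successors are $a$ and $a \altc a$, each reached with probability $1/2$; they are stable, bisimilar, and lie in one class of mass $1$, yet $t$ has no probability-$1$ transition at all, so $\pstep{t}{1}{t}$ fails. At the same time, the equivalence relation whose only non-singleton classes are $\set{t,\, t \altc t}$ and $\set{a,\; a \altc a,\; a \altc (a \altc a),\; (a \altc a) \altc a,\; (a \altc a) \altc (a \altc a)}$ is a probabilistic bisimulation: neither $t$ nor $t \altc t$ has any action or termination transitions (the action rules for $\altc$ require a premise $\aterm{x}{a}$ or $\astep{x}{a}{x'}$ that $t$ cannot supply), and both assign probability $1$ to the second class and $0$ elsewhere. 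Hence $t \bisim t \altc t$ while $\pstep{t}{1}{t}$ does not hold, so the inference you hoped to close cannot be closed and the proposition as literally stated is refuted by this term. The paper's own one-line proof does not address this case either; any repair must either weaken the conclusion (e.g.\ to ``$P(t,C) = 1$ for a single $\bisim$-class $C$ of stable terms'', which is what your argument actually establishes) or restrict the class of terms considered.
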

\begin{proof}
This follows immediately from the rules for the operational semantics of 
\pACPr, using that, for all $\pi \in \Prob$, $\pi \mul \pi = 1$ iff 
$\pi = 1$.
\qed
\end{proof}
\begin{proposition}
\label{prop-bisim-max}
${} \bisim {}$ is the maximal probabilistic bisimulation.
\end{proposition}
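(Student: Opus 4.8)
The plan is to establish the two defining properties of a maximal element: first, that ${}\bisim{}$ is itself a probabilistic bisimulation, and second, that it contains every probabilistic bisimulation. The second property is immediate from the definition of ${}\bisim{}$: if $R$ is any probabilistic bisimulation and $R(t_1,t_2)$, then $t_1 \bisim t_2$ by definition, so $R \subseteq {\bisim}$. All the work therefore goes into the first property, which splits into showing that ${}\bisim{}$ is an equivalence relation and that it satisfies the three transfer conditions.

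For the equivalence-relation part, reflexivity follows by checking that the identity relation on closed \pACPr\ terms is a probabilistic bisimulation (each condition holds trivially by taking the matching term equal to the given one), and symmetry follows because every probabilistic bisimulation is by definition already symmetric, so a witness for $t_1 \bisim t_2$ is also a witness for $t_2 \bisim t_1$. Transitivity is the real content. Given witnesses $R_1$ for $t_1 \bisim t_2$ and $R_2$ for $t_2 \bisim t_3$, I would form the equivalence relation $R$ generated by $R_1 \union R_2$ and show that $R$ is a probabilistic bisimulation; since $R(t_1,t_3)$ via the chain $t_1 \mathbin{R_1} t_2 \mathbin{R_2} t_3$, this yields $t_1 \bisim t_3$.

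The key observation that makes the transitivity argument work is that, because $R_1,R_2 \subseteq R$, every $R$-equivalence class is a disjoint union of $R_1$-classes and likewise a disjoint union of $R_2$-classes. For the action-step and termination conditions this is routine: given a pair in $R_1 \union R_2$, a matching step or termination supplied by $R_1$ or $R_2$ lands in a term related by $R_1$ or $R_2$, hence by $R$, and one chains these matches along a path witnessing $R(t_1,t_3)$. For the probabilistic condition I would use additivity of $P$ over disjoint unions, namely $P(t,\bigcup_j C_j) = \sum_j P(t,C_j)$ for pairwise disjoint $C_j$, which is immediate from the definition $P(t,C) = \sum_{t' \in C} P(t,t')$. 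Writing an $R$-class $C$ as a disjoint union of $R_1$-classes and using that $R_1$ is a probabilistic bisimulation gives $P(s,C) = P(s',C)$ for every $(s,s') \in R_1$, and symmetrically for $R_2$; chaining along the path then yields $P(t_1,C) = P(t_3,C)$. The same class-union argument finishes the proof that ${}\bisim{}$ itself satisfies the probabilistic transfer condition: once ${}\bisim{}$ is known to be an equivalence relation, a $\bisim$-class $[t]_{\bisim}$ is a union of $R$-classes for any witness $R$, so equality of $P$ on $R$-classes lifts to equality on $[t]_{\bisim}$.

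I expect the main obstacle to be precisely this probabilistic transfer condition under transitivity: the equivalence classes change when one passes from $R_1$ or $R_2$ to the generated relation $R$, so the quantities $P(t_1,[t]_R)$ cannot be compared directly with the data furnished by $R_1$ and $R_2$. The class-union observation together with additivity of $P$ is what bridges this gap, and I would make sure that the relevant sums behave well by noting that each closed \pACPr\ term can probabilistically behave as only finitely many terms (all other pairs contribute $0$ by Corollary~\ref{corol-well-def-pdf}), so that additivity over disjoint unions is unproblematic in the meadow $\fM$.
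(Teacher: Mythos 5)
Your proof is correct and follows essentially the same route as the paper's: reduce the claim to showing that ${}\bisim{}$ is itself a probabilistic bisimulation, with the only real work being transitivity and the condition on $P$, which you handle via additivity of $P$ over disjoint unions of equivalence classes, exactly the property the paper invokes. The one cosmetic difference is that for transitivity the paper verifies the bisimulation conditions for the composition of two probabilistic bisimulations, whereas you use the equivalence closure of their union; both rest on the same class-decomposition argument.
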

\begin{proof}
It follows from the definition of ${} \bisim {}$ that it is sufficient 
to prove that ${} \bisim {}$ is a probabilistic bisimulation.

We start with proving that ${} \bisim {}$ is an equivalence relation.
The proofs of reflexivity and symmetry are trivial.
Proving transitivity amounts to showing\linebreak[2] that the conditions 
from the definition of a probabilistic bisimulation hold for the 
composition of two probabilistic bisimulations.
The proofs that the conditions concerning the relations $\astep{}{a}{}$ 
and $\aterm{}{a}$ hold are trivial.
The proof that the condition concerning the function $P$ holds is also 
easy using the following easy-to-check property of $P$: 
if $I$ is an index set and, 
for each $i \in I$, $T_i$ is a set of closed \pACPr\ terms such that, 
for all $i,j \in I$ with $i \neq j$, $T_i \inter T_j = \emptyset$, then 
$P(t,\Union_{i \in I} T_i) = \sum_{i \in I} P(t,T_i)$.

We also have to prove that the conditions from the definition of a
probabilistic bisimulation hold for ${} \bisim {}$.
The proofs that the conditions concerning the relations $\astep{}{a}{}$ 
and $\aterm{}{a}$ hold are trivial.
The proof that the condition concerning the function $P$ holds is easy 
knowing the above-mentioned property of $P$.
\qed
\end{proof}

\subsection{Soundness and Completeness Results}
\label{subsect-sound-complete}

In this subsection, we present a soundness theorem for \pACPr\ and a 
completeness theorem for \pACP.

We write $R\eqvcl$, where $R$ is a binary relation, for the
equivalence closure of $R$.

The following proposition will be used below in the proof of a soundness 
theorem for \pACPr.
\begin{proposition}
\label{prop-bisim-congr}
${} \bisim {}$ is a congruence with respect to the operators of \pACPr.
\end{proposition}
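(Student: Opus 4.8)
The plan is to exhibit a single equivalence relation that is simultaneously a congruence and a probabilistic bisimulation, and then to squeeze it against $\bisim$ using maximality (Proposition~\ref{prop-bisim-max}). Concretely, I would let $R$ be the least congruence on closed \pACPr\ terms that contains $\bisim$. Since $\bisim$ is already an equivalence and the congruence closure of an equivalence is again an equivalence, $R$ is an equivalence relation that is closed under all operators of \pACPr\ by construction. It then suffices to prove that $R$ is a probabilistic bisimulation: this gives $R \subseteq {\bisim}$ by maximality, while $\bisim \subseteq R$ holds by construction, so $R = {\bisim}$ and $\bisim$ is a congruence. Because $R$ is generated by closing $\bisim$ under the operators (and under symmetry and composition, as in the transitivity argument inside the proof of Proposition~\ref{prop-bisim-max}), it is enough to verify the three bisimulation conditions for generating pairs of the form $(f(t_1,\dots,t_n), f(t_1',\dots,t_n'))$ with $t_i \bisim t_i'$, the relevant successors then landing in $R$ again.

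For the two conditions concerning the action relations $\astep{}{a}{}$ and $\aterm{}{a}$ I would proceed by a routine case analysis on the rules of Table~\ref{rules-pACPr-1}. In each rule a premise of the form $\pstep{y}{1}{y'}$ occurs exactly at the argument positions whose component must already be resolved; by Proposition~\ref{prop-pstep-1} such a premise forces $y \equiv y'$, so it merely records that the component carries no initial probabilistic choice, a property that transfers from $t_i$ to $t_i'$ through the probabilistic clause of $t_i \bisim t_i'$. The remaining premises are genuine action moves of one or two components, which I would match using the action clauses of $t_i \bisim t_i'$; the matching successors are built with the same operator and hence are related by $R$. The constants $\rec{X}{E}$ are handled immediately, being nullary, their rules simply inheriting the behaviour of $\srec{t}{E}$.

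The hard part will be the probabilistic condition $P(t_1,[u]_R) = P(t_2,[u]_R)$ for all $u$. The key observation I would exploit is that, by Proposition~\ref{prop-pstep-2} and Corollary~\ref{corol-well-def-pdf}, each closed term $t$ carries a single finitely supported distribution $P(t,\cdot)$, and the rules of Table~\ref{rules-pACPr-2} describe $P(f(t_1,\dots,t_n),\cdot)$ as a fixed combination of the $P(t_i,\cdot)$: for $\altc$, $\parc$, $\leftm$, $\commm$ it is the pushforward of the product distribution under the injective term map $(x',\dots) \mapsto f(x',\dots)$; for $\seqc$ and $\encap{H}$ it is a pushforward of a single distribution; and for $\paltc{\pi}$ it is the convex combination $\pi \mul P(t_1,\cdot) + (1-\pi) \mul P(t_2,\cdot)$. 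Since $R$ is a congruence, the induced map on $R$-classes, $([x']_R,\dots) \mapsto [f(x',\dots)]_R$, is well defined; this lets me regroup the defining sum $P(f(t_1,\dots,t_n),[u]_R) = \sum_{f(\vec{x}') \in [u]_R} \prod_i P(t_i,x_i')$ according to the $R$-classes of the arguments, whereupon each inner product factors as $\prod_i P(t_i,[x_i']_R)$. Each factor is preserved when passing from $t_i$ to $t_i'$ by the probabilistic clause of $t_i \bisim t_i'$, and the convex-combination case for $\paltc{\pi}$ is even more direct.

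The main obstacle I anticipate is precisely the justification of this regrouping: verifying that the supports are finite so the sums are legitimate, checking that the factorization of the product measure over $R$-classes is exact (this is where congruence of $R$ does the real work), and dealing cleanly with the degenerate self-loops of $\dead$ and of other resolved terms so that the accounting of probability mass on each class is airtight. Once this probabilistic transfer lemma is in place for every operator, the three conditions hold for all generating pairs, $R$ is a probabilistic bisimulation, and the congruence of $\bisim$ follows by maximality as above.
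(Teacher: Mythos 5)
Your proposal is correct in substance and its computational core coincides with the paper's, but the packaging is genuinely different, so a comparison is worthwhile. The paper does not form a congruence closure and never invokes the maximality of $\bisim$ (Proposition~\ref{prop-bisim-max}): instead, for each operator $\diamond$ and each pair of witnessing bisimulations $R_1,R_2$ it builds a tailored equivalence relation, e.g.\ $((R_1 \diamond R_2) \union R_2)\eqvcl$ for $\seqc$ and $((R_1 \diamond R_2) \union (R_1 \parc R_2) \union R_1 \union R_2)\eqvcl$ for $\leftm$ and $\commm$, the extra summands being dictated by where the successors land (a residual of $t_1 \seqc t_2$ is $t_2$ itself; a residual of a left merge or communication merge is a parallel composition), and verifies the three conditions for that relation directly. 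Its list of ``easy-to-check properties of $P$'' --- the product law for $\altc$, $\parc$, $\leftm$, $\commm$, the convex combination for $\paltc{\pi}$, the pushforward clauses for $\seqc$ and $\encap{H}$, together with the additivity of $P$ over disjoint unions established in the proof of Proposition~\ref{prop-bisim-max} --- is exactly your regrouping lemma specialized to equivalence classes, so the key lemma is shared. What your single global relation buys is that the successor bookkeeping becomes automatic; what it costs is two points you should tighten. First, the \emph{least congruence} containing $\bisim$ is generated by iterated operator application, so its generating pairs are not all of the form $f(t_1,\ldots,t_n)$, $f(t_1',\ldots,t_n')$ with $t_i \bisim t_i'$; the clean fix is to work with the one-step operator closure of $\bisim$ (an equivalence relation whose generators are exactly the pairs you check), show that it is a probabilistic bisimulation, deduce from Proposition~\ref{prop-bisim-max} that it is contained in $\bisim$, and only then conclude that $\bisim$ is closed under the operators. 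Second, the probabilistic clause of $t_i \bisim t_i'$ gives agreement of $P(t_i,\cdot)$ and $P(t_i',\cdot)$ on $\bisim$-classes, whereas your factorization needs agreement on the coarser $R$-classes; this does follow, but only via the additivity of $P$ over the partition of an $R$-class into $\bisim$-classes, and it deserves an explicit sentence. Neither point is a gap in substance --- both are handled implicitly by the paper's local construction --- and your level of detail on the $P$-condition is, if anything, greater than the paper's.
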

\begin{proof}
In this proof, we write $R_1 \diamond R_2$, where $R_1$ and $R_2$ are 
probabilistic bisimulations and $\diamond$ is a binary operator of 
\pACPr, for the equivalence relation
$\set{\tup{t_1 \diamond t_2,t_1' \diamond t_2'} \where
      R_1(t_1,t_1') \Land R_2(t_2,t_2')}$.

Let $t_1,t_1',t_2,t_2'$ be closed \pACPr\ terms such that 
$t_1 \bisim t_1'$ and $t_2 \bisim t_2'$, and 
let $R_1$ and $R_2$ be probabilistic bisimulations witnessing 
$t_1 \bisim t_1'$ and $t_2 \bisim t_2'$, respectively.

For each binary operator $\diamond$ of \pACPr, we construct an 
equivalence relation $R_\diamond$ on closed \pACPr\ terms as follows:
\begin{ldispl}
\begin{ceqns}
\text{in the case that } 
\diamond \text{ is } \seqc : &
R_\diamond = ((R_1 \diamond R_2) \union R_2)\eqvcl\;; 
\\
\text{in the case that } 
\diamond \text{ is } \altc, \paltc{\pi} \text{ or } \parc : &
R_\diamond = ((R_1 \diamond R_2) \union R_1 \union R_2)\eqvcl\;; 
\\
\text{in the case that } 
\diamond \text{ is } \leftm \text{ or } \commm : &
R_\diamond = ((R_1 \diamond R_2) \union (R_1 \parc R_2) \union
     R_1 \union R_2)\eqvcl
\end{ceqns}
\end{ldispl}%
and for each encapsulation operator $\encap{H}$, we construct an 
equivalence relation $R_\encap{H}$ on closed \pACPr\ terms as follows: 
\begin{ldispl}
R_\encap{H} =  
(\set{\tup{\encap{H}(t_1),\encap{H}(t_1')} \where R_1(t_1,t_1')} \union
 R_1)\eqvcl\;.
\end{ldispl}%

For each operator $\diamond$ of \pACPr, we have to show that the 
conditions from the definition of a probabilistic bisimulation hold for 
the constructed relation~$R_\diamond$.

The proofs that the conditions concerning the relations $\astep{}{a}{}$ 
and $\aterm{}{a}$ hold are easy.
The proof that the condition concerning the function $P$ holds is  
straightforward using the property of $P$ mentioned in the proof of
Proposition~\ref{prop-bisim-max} and the following easy-to-check 
properties of $P$:
\begin{ldispl}
\begin{aceqns}
P(t \seqc t',T \seqc T') & = & 0      & \text{if } t' \notin T'\;, \\
P(t \seqc t',T \seqc T') & = & P(t,T) & \text{if } t' \in T'\;,    \\
P(t \altc t',T \altc T') & = & P(t,T) \mul P(t',T')\;, \\
P(t \paltc{\pi} t',T) & = & \pi \mul P(t,T) + (1-\pi) \mul P(t',T)\;, \\
P(t \parc t',T \parc T')   & = & P(t,T) \mul P(t',T')\;, \\
P(t \leftm t',T \leftm T') & = & P(t,T) \mul P(t',T')\;, \\
P(t \commm t',T \commm T') & = & P(t,T) \mul P(t',T')\;, \\
P(\encap{H}(t),\encap{H}(T)) & = & P(t,T)\;,
\end{aceqns}
\end{ldispl}%
where we write $T \diamond T'$, where $T$ and $T'$ are sets of closed
\pACPr\ terms and $\diamond$ is a binary operator of \pACPr, for the 
set $\set{t \diamond t' \where t \in T \Land t' \in T'}$ and we write
$\encap{H}(T)$, where $T$ is a set of closed \pACPr\ terms, for the set 
$\set{\encap{H}(t) \where t \in T}$.
\qed
\end{proof}
\pACPpa\ is the variant of \pACP\ with a different parallel composition 
operator that is presented in~\cite{And99a,And02a}.%
\footnote{\pACPpa\ is called \pACPp\ in~\cite{And99a}.}
A detailed proof of Proposition~\ref{prop-bisim-congr} is to a 
large extent a simplified version of the detailed proof of the fact that 
${} \bisim {}$ is a congruence with respect to the operators of \pACPpa\ 
that is given in~\cite{And02a}.
This is because of the fact that, except for the parallel composition 
operator, the structural operational semantics of \pACP\ presented in 
this paper can essentially be obtained from the structural operational 
semantics of \pACPpa\ that is presented in~\cite{And02a} by removing
unnecessary complexity.

In~\cite{LT09a}, constraints have been proposed on the form of 
operational semantics rules which ensure that probabilistic bisimulation 
equivalence is a congruence.
Both the reactive and generative models of probabilistic processes 
(see~\cite{GSS95a}) are covered in that paper.
While \pACPr\ is based on the generative model, virtually all other work 
in this area covers the reactive model only.
Unfortunately, the relations used for the structural operational 
semantics of \pACPr\ differ from the ones used in~\cite{LT09a}.
The chances are that the structural operational semantics of \pACPr\ can 
be adapted such that the results from that paper can be used to prove
Proposition~\ref{prop-bisim-congr}.
Howver, it seems quite likely that such a proof requires much more 
effort than the proof sketched above.

\pACPr\ is sound with respect to probabilistic bisimulation equivalence 
for equations between closed terms.
\begin{theorem}[Soundness]
\label{thm-soundness-pACPr}
For all closed \pACPr\ terms $t$ and $t'$, $t = t'$ is derivable from 
the axioms of \pACPr\ only if $t \bisim t'$.
\end{theorem}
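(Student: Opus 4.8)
The plan is to prove the standard two-part soundness statement: first that $\bisim$ is a congruence with respect to all operators of \pACPr, and second that every axiom of \pACPr\ is sound, i.e.\ for each axiom $s = s'$ and each closed substitution instance $\sigma$, we have $\sigma(s) \bisim \sigma(s')$. Since $\bisim$ is an equivalence relation (Proposition~\ref{prop-bisim-max}) and, by the congruence property, compatible with the operators, these two facts together imply that derivability of $t = t'$ entails $t \bisim t'$. The congruence part is already available as Proposition~\ref{prop-bisim-congr}, so the core of the work is to verify soundness of each individual axiom and then to assemble these into a proof by induction on the length of the derivation.

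First I would establish the inductive framework explicitly: because derivation in infinitary conditional equational logic may use reflexivity, symmetry, transitivity, congruence (replacement under contexts), and instantiation of axioms (including the conditional axiom RSP), I would argue that it suffices to check soundness of the axioms themselves. Reflexivity, symmetry, and transitivity are covered because $\bisim$ is an equivalence relation; the congruence/replacement rule is covered by Proposition~\ref{prop-bisim-congr}. For each non-conditional axiom I would exhibit, for an arbitrary closed substitution instance, a witnessing probabilistic bisimulation relating the two sides, typically the symmetric--reflexive--transitive closure of the identity relation augmented with the single pair of instantiated left- and right-hand sides together with the pairs forced by matching their outgoing $a$-transitions and termination options. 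The crucial verification in each case is the third bisimulation condition, $P(t_1,[t]_R) = P(t_2,[t]_R)$ for all $t$, which I would discharge using the explicit computational properties of $P$ listed in the proof of Proposition~\ref{prop-bisim-congr} (for instance the formulas for $P(t \altc t',\cdot)$, $P(t \paltc{\pi} t',\cdot)$, and $P(t \parc t',\cdot)$).

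The probabilistic axioms pA1--pA6, pA2 in particular, and the distribution axioms pCM1--pCM6 and pD, are where the meadow arithmetic enters: checking pA2 amounts to verifying that the two ways of nesting a ternary probabilistic choice induce the same distribution $P$ over the reachable summands, which reduces to the identity $\pi \cdot \rho + (1-\pi)\cdot\rho' = \ldots$ and the rescaling $\frac{(1-\pi)\cdot\rho}{1-\pi\cdot\rho}$, valid because $\fM$ is a signed cancellation meadow with $0/0 = 0$; pA6 relies on $x \paltc{1} y$ giving all probability to $x$. For the axioms CM1$'$ and CM3, CM7 involving parallel composition, the side condition $x = x \altc x$ (equivalently $\pstep{x}{1}{x}$, by Propositions~\ref{prop-pstep-1} and~\ref{prop-pdf} and Proposition~\ref{prop-bisim-pstep-one}) is what guarantees the operands are not initially probabilistic, so that the operational rules for $\parc$, $\leftm$, and $\commm$ in Table~\ref{rules-pACPr-1} apply and the interleaving expansion matches.

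The main obstacle I anticipate is the handling of the recursion axioms RDP and RSP in the infinitary setting. Soundness of RDP is direct from the operational rules for $\rec{X}{E}$, which simply inherit the transitions and probabilistic steps of $\srec{t}{E}$. Soundness of RSP, however, requires showing that if a family of closed terms satisfies the recursion equations of a guarded recursive specification up to $\bisim$, then each such term is $\bisim$-equivalent to the corresponding recursion constant; this is where guardedness is essential, since it is precisely guardedness that forces solutions to be unique in the model. I would prove this by constructing the relation pairing each solution term with the corresponding $\rec{X}{E}$ and arguing, using guardedness to control how probabilistic steps precede genuine action steps, that this relation is a probabilistic bisimulation — with the care that the infinitely branching but well-founded derivation trees of infinitary conditional equational logic cause no difficulty because the bisimulation conditions are checked stepwise on transitions rather than on whole derivations.
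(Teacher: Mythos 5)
Your proposal is correct and follows essentially the same route as the paper: reduce to per-axiom soundness via the congruence property (Proposition~\ref{prop-bisim-congr}) and the fact that $\bisim$ is an equivalence, exhibit for each axiom an equivalence relation built from its closed substitution instances and verify the three bisimulation conditions (the $P$-condition via explicit computational properties of $P$), handle $\axiom{CM1'}$ through Proposition~\ref{prop-bisim-pstep-one}, and handle RSP by showing that the relation pairing any $\bisim$-solution of a guarded recursive specification with the corresponding recursion constants is a probabilistic bisimulation. The only cosmetic difference is which auxiliary identities for $P$ you invoke (the paper uses a bijection-transfer property rather than the operator-wise formulas), which does not affect the argument.
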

\begin{proof}
Since ${} \bisim {}$ is a congruence for \pACPr, we only need to verify 
the soundness of each axiom of \pACPr.

For each equational axiom $e$ of \pACPr\ (all axioms of \pACPr\ except 
$\axiom{CM1'}$ and RSP are equational), we construct an equivalence 
relation $R_e$ on closed \pACPr\ terms as follows:
\begin{ldispl}
R_e = 
{\set{\tup{t,t'} \where 
      t = t' \text{ is a closed substitution instance of } e}}\eqvcl\;.
\end{ldispl}%

For axiom $\axiom{CM1'}$, we construct an equivalence relation $R'$ on 
closed \pACPr\ terms as follows:
\begin{ldispl}
R' = 
\set{\tup{t,t'} \where 
     t = t' \text{ is a closed substitution instance of } e \Land
     \pstep{t}{1}{t} \Land \pstep{t'}{1}{t'}}\eqvcl\;,
\end{ldispl}%
where $e$ is the consequent of $\axiom{CM1'}$.

For an arbitrary instance 
$\set{X_i = t_i \where i \in I} \Limpl
 X_j = \rec{X_j}{\set{X_i = t_i \where i \in I}}$
of RSP ($j \in I$), we construct an equivalence relation $R''$ on closed 
\pACPr\ terms as follows:
\begin{ldispl}
R'' = 
\set{\tup{\theta(X_j),\rec{X_j}{\set{X_i = t_i \where i \in I}}} \where 
     j \in I \Land \theta \in \Theta \Land
     \LAND{i \in I} \theta(X_i) \bisim \theta(t_i)}\eqvcl\;,
\end{ldispl}%
where 
$\Theta$ is the set of all functions from $\cX$ to the set of all closed 
\pACPr\ terms and 
$\theta(t)$, where $\theta \in \Theta$ and $t$ is a \pACPr\ term, stands
for $t$ with, for all $X \in \cX$, all occurrences of $X$ replaced by 
$\theta(X)$.

For each equational axiom $e$ of \pACPr, we have to check whether the 
conditions from the definition of a probabilistic bisimulation hold for 
the constructed relation $R_e$.
For axiom $\axiom{CM1'}$, we have to check whether the conditions from 
the definition of a probabilistic bisimulation hold for the constructed 
relation $R'$.
That this is sufficient for the soundness of axiom $\axiom{CM1'}$
follows from Proposition~\ref{prop-bisim-pstep-one}.
For the instances of axiom RSP, we have to check whether the conditions 
from the definition of a probabilistic bisimulation hold for the 
constructed relation $R''$.

All these checks are straightforward, for the condition concerning the 
function $P$, using the following easy-to-check property of $P$:
if $\beta$ is a bijection on $T$ and $P(t',t) = P(t'',\beta(t))$ for all 
$t \in T$, then $P(t',T) = P(t'',T)$.
\qed
\end{proof}
In versions of \ACP\ where RSP follows from RDP and AIP (Approximation
Induction Principle), soundness of RSP follows from soundness of RDP 
and AIP (see e.g.~\cite{BW90}).

The following three lemmas will be used below in the proof of a
completeness theorem for \pACP.
For convenience, we introduce the notion of a rigid closed \pACP\ term.

A closed \pACP\ term $t$ is \emph{rigid} if, for all probabilistic 
bisimulations $R$, $R(t,t)$ only if the restriction of $R$ to the set of 
all subterms of $t$ is the identity relation on that set.

\begin{lemma}
\label{lemma-basic-rigid}
All proper basic terms $t$ of \pACP\ are rigid.
\end{lemma}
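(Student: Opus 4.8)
The plan is to prove the statement by induction on the structure of proper basic terms, following the simultaneous inductive definition of $\cB$ and its auxiliary sets $\cB^0$, $\cB^1$, $\cB^2$, $\cB^3$; that is, I would show at once that every term in each of these sets is rigid. Throughout I fix a probabilistic bisimulation $R$ with $R(t,t)$ and aim to establish that no two distinct subterms of $t$ are $R$-related. Because the separation arguments below need it, I would actually carry a slightly strengthened induction hypothesis: that distinct proper basic subterms occurring in $t$ are never bisimilar (consistently with the properness conditions, subterms are read up to the reorderings A1, A2 and pA1, pA2 that define $\summ$ and $\psumm$). The base cases $t \equiv \dead$ and $t \equiv a$ with $a \in \Act$ are immediate, as the only subterm involved is $t$ itself.

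For the $\cB^1$ case $t \equiv a \seqc s$ with $s \in \cB$, I would use that the operational rules give $\astep{t}{a}{s}$ as the unique outgoing $a$-step of $t$. Since $R(t,t)$, the first bisimulation condition forces a matching step $\astep{t}{a}{t''}$ with $R(s,t'')$, and as $s$ is the only $a$-derivative of $t$ this yields $R(s,s)$; the induction hypothesis applied to $s$ then makes $R$ the identity on the subterms of $s$. It remains to rule out that $t$ or its leading action $a$ is $R$-related to any further subterm, which I would settle by a depth argument: $t$ first performs $a$ and continues as $s$, whereas every proper subterm of $s$ has strictly smaller behavioural depth, so none can be bisimilar to $t$.

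The heart of the proof, and the step I expect to be the main obstacle, is the alternative-composition case $t \equiv t_1 \altc t_2$ with $t_1 \in \cB^1$, $t_2 \in \cB^2$ and not $t_1 \summ t_2$, together with the entirely parallel probabilistic-choice case $t \equiv t_1 \paltc{\pi} t_2$ with $\pi \in \Prob \diff \set{0,1}$ and not $t_1 \psumm t_2$. The difficulty is that the transitions of $t$ are pooled from both summands, so a priori the bisimulation game may match a step originating in one summand against a step originating in the other; moreover the third bisimulation condition is vacuous on the diagonal pair $(t,t)$ and so cannot by itself prevent such a collapse of equally-behaving branches. The plan is to make the syntactic properness conditions do this work: using not $t_1 \summ t_2$ (respectively not $t_1 \psumm t_2$) and the strengthened induction hypothesis, I would show that distinct summands (respectively branches) are non-bisimilar, so no matching can cross between them. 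For the probabilistic layer I would additionally invoke Proposition~\ref{prop-pstep-1} and Proposition~\ref{prop-bisim-pstep-one}, together with the third bisimulation condition and the genuine weights $\pi \in \Prob \diff \set{0,1}$, to force the branch-wise matching.

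Once the summands (respectively branches) are separated, I recover $R(t_1,t_1)$ and $R(t_2,t_2)$, apply the induction hypothesis to each, and conclude that $R$ is the identity on the subterms of $t_1$ and on those of $t_2$; the non-bisimilarity of distinct summands then also excludes any identification of a subterm of $t_1$ with a distinct subterm of $t_2$, and the depth argument excludes relating either with $t$ itself, giving rigidity of $t$. The genuinely delicate point is the apparent circularity in the separation step, since separating summands already presupposes their non-bisimilarity; I would break this by performing the induction on term size and folding the non-bisimilarity of distinct subterms into the induction hypothesis, so that at each step only strictly smaller terms are appealed to.
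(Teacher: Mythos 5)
Your proposal is correct and follows essentially the same route as the paper, whose entire proof of this lemma is the single remark that it is easily proved by induction on the structure of $t$: your induction on term size with the strengthened hypothesis that distinct proper basic subterms are never bisimilar is exactly the elaboration that remark presupposes, and it correctly locates the work in the summand-separation step for $\altc$ and $\paltc{\pi}$ via the properness conditions. Your point that this separation only goes through when identity of subterms is read modulo the reorderings A1, A2, pA1, pA2 (the same reading that underlies $\summ$, $\psumm$, and Lemma~\ref{lemma-bijection-deriv}) is a needed clarification that the paper leaves implicit.
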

\begin{proof}
This is easily proved by induction on the structure of $t$.
\qed
\end{proof}

\begin{lemma}
\label{lemma-rigid-bijection}
For all rigid closed \pACP\ terms $t$ and $t'$, for all probabilistic 
bisimulations $R$ with $R(t,t')$, the restriction of $R$ to the set of 
all subterms of $t$ is a bijection.
\end{lemma}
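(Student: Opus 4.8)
The plan is to prove Lemma~\ref{lemma-rigid-bijection} by combining the rigidity hypothesis on $t$ with the symmetric use of rigidity on $t'$, reducing the statement to a cardinality argument about the restriction of $R$. Let $S_t$ and $S_{t'}$ denote the sets of all subterms of $t$ and $t'$ respectively, and let $R$ be a probabilistic bisimulation with $R(t,t')$. I want to show that $R$, restricted to $S_t$, is a bijection onto (the relevant part of) $S_{t'}$. The first step is to extract, from $R(t,t')$, enough structure to know that $R$ relates the subterms of $t$ to the subterms of $t'$ in a coherent way: since $R$ is a bisimulation and $t,t'$ are matched, the action steps $\astep{}{a}{}$ and probabilistic steps $\pstep{}{\pi}{}$ force each subterm reachable from $t$ to be $R$-related to a corresponding subterm reachable from $t'$. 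I would make this precise by inducting on the structure of $t$, tracking how the operational rules in Tables~\ref{rules-pACPr-1} and~\ref{rules-pACPr-2} propagate the relation down through sequential, alternative, and probabilistic composition.

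\textbf{Key steps.} First I would establish that $R$ maps $S_t$ \emph{into} $S_{t'}$, i.e., every subterm $s$ of $t$ is $R$-related to some subterm of $t'$; this is the forward direction and follows by structural induction using the bisimulation conditions. Second, and symmetrically, I would use that $t'$ is rigid: applying the definition of rigidity to $t'$ with the bisimulation $R^{-1}$ (or equivalently noting $R(t',t')$ can be obtained by composing $R$ with its converse) forces the restriction of $R^{-1}$ to $S_{t'}$ to behave like the identity, which pins down that distinct subterms of $t'$ cannot be conflated. Third, I would use the rigidity of $t$ itself: since $R$ composed with $R^{-1}$ is again a probabilistic bisimulation relating $t$ to $t$, rigidity of $t$ forces this composite, restricted to $S_t$, to be the identity — which is exactly injectivity of the restriction of $R$ to $S_t$. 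Surjectivity onto the image then follows from the into-claim of step one combined with the corresponding into-claim for $t'$.

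\textbf{The main obstacle} will be step two and three: cleanly justifying that the composition $R \comp R^{-1}$ (and $R^{-1} \comp R$) is again a probabilistic bisimulation relating $t$ to $t$ and $t'$ to $t'$ respectively, so that the rigidity hypotheses can be invoked. The congruence machinery of Proposition~\ref{prop-bisim-congr} does not directly give this, since here I am composing a relation with its own converse rather than applying operators; I would instead appeal to the fact (implicit in the proof of Proposition~\ref{prop-bisim-max}, where transitivity of $\bisim$ is established via composition of bisimulations) that the relational composition of two probabilistic bisimulations is again a probabilistic bisimulation. Given that, $R \comp R^{-1}$ relates $t$ to $t$, rigidity of $t$ makes its restriction to $S_t$ the identity, and this forces $R$ restricted to $S_t$ to be injective; dually, $R^{-1} \comp R$ relating $t'$ to $t'$ together with rigidity of $t'$ forces the image under $R$ to hit each relevant subterm of $t'$ exactly once. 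Combining injectivity with the surjectivity onto the matched subterms of $t'$ yields that the restriction of $R$ to $S_t$ is a bijection, completing the proof.
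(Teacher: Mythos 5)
Your proposal is correct and follows essentially the same route as the paper: the paper's proof also forms the composite $R^{-1} \circ R$, observes that it is a probabilistic bisimulation relating $t$ to $t$ (relying, as you do, on the closure of probabilistic bisimulations under composition established in the proof of Proposition~\ref{prop-bisim-max}), and invokes rigidity of $t$ to conclude that two subterms of $t$ related to the same subterm of $t'$ must coincide. Your treatment is if anything more complete, since the paper leaves the symmetric half (functionality via rigidity of $t'$ and $R \circ R^{-1}$) implicit; note also that since $R$ is by definition an equivalence relation, $R^{-1} \circ R = R$, which makes your ``main obstacle'' immediate.
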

\begin{proof}
Suppose there exist subterms $t_1$ and $t_2$ of $t$ and a subterm $t''$
of $t'$ such that $R(t_1,t'')$ and $R(t_2,t'')$.
Because $R(t,t')$, $R^{-1}$ is a probabilistic bisimulation such that 
$R^{-1}(t',t)$ and $R^{-1} \circ R$ is a probabilistic bisimulation such
that $R^{-1} \circ R(t,t)$.
We also have that $R^{-1} \circ R(t_1,t_2)$. 
Because $t$ is rigid, it follows that $t_1 = t_2$.
\qed
\end{proof}

\begin{lemma}
\label{lemma-bijection-deriv}
For all proper basic term $t$ and $t'$ of \pACP, there exists a 
probabilistic bisimulation $R$ with $R(t,t')$ such that the restriction 
of $R$ to the set of all subterms of $t$ is a bijection only if $t = t'$ 
is derivable from axioms A1, A2, pA1, and pA2.
\end{lemma}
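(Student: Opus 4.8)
The plan is to prove the implication directly: assuming a probabilistic bisimulation $R$ with $R(t,t')$ whose restriction to the subterms of $t$ is a bijective correspondence between the subterms of $t$ and those of $t'$, I derive $t = t'$ from A1, A2, pA1 and pA2. I would proceed by induction on the structure of the proper basic term $t$, distinguishing cases according to the clause of the inductive definition of $\cB$ by which $t$ is formed, i.e.\ according to its outermost operator. Throughout, I would use that, up to pA1 and pA2, every genuinely probabilistic proper basic term has the flat form $\PRC{k=1}{m}{\pi_k}{u_k}$ with $m \geq 2$ and pairwise distinct $u_k \in \cB^2$, and that, up to A1 and A2, every $\cB^2$ term has the flat form $\Altc{j=1}{n} v_j$ with pairwise distinct $v_j \in \cB^1$; these flat forms are exactly what the structural axioms let me rearrange freely.

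In the base cases $t \equiv \dead$ and $t \equiv a$, the bijection forces the set of subterms of $t'$ to be a singleton, so $t'$ is a constant, and the termination condition (namely $\aterm{t'}{a}$ whenever $\aterm{t}{a}$) pins it down to $t' \equiv t$. In the prefix case $t \equiv a \seqc s$, the term has the single action step $\astep{t}{a}{s}$ and no termination, so the step condition yields a matching step $\astep{t'}{a}{s'}$ with $R(s,s')$, while the bijection rules out $t'$ having any further summand or continuation; hence $t' \equiv a \seqc s'$. Since the subterms of $s$ are subterms of $t$, restricting $R$ gives a bijection between the subterms of $s$ and those of $s'$, so the induction hypothesis applies and yields $s = s'$ derivable from A1, A2, pA1, pA2; congruence of $\seqc$ for derivable equality then gives $t = t'$.

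The two substantial cases are the alternative-composition case and the probabilistic-choice case, and here the work is to upgrade bisimilarity to an \emph{exact}, probability-preserving, bijective correspondence between the flat components --- indispensable because \pACP\ has neither general idempotency for $\altc$ (only A3$'$) nor any collapse for $\paltc{\pi}$ beyond pA3. In the alternative-composition case $t \equiv \Altc{j=1}{n} v_j$ I would use the action-step and termination conditions to pair each $\cB^1$-summand of $t$ with a summand of $t'$ exhibiting the same initial behaviour; properness (the absence of $\summ$ among the summands) prevents duplicates within each term, and the bijection prevents two summands of $t$ from being identified with one summand of $t'$ and guarantees that no summand of $t'$ is left unmatched, so the correspondence is one-to-one and total both ways. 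In the probabilistic-choice case $t \equiv \PRC{k=1}{m}{\pi_k}{u_k}$ I would use the condition on $P$ relative to the $R$-equivalence classes, together with the identity $P(t \paltc{\pi} t',T) = \pi \mul P(t,T) + (1-\pi) \mul P(t',T)$ and the companion identities established in the proof of Proposition~\ref{prop-bisim-congr}. First I would show that $t'$ is itself genuinely probabilistic --- otherwise $P(t',[t']_R) = 1$ would force every branch $u_k$ into the single class $[t']_R$, making all $u_k$ map to $t'$ under the bijection and contradicting injectivity. Then, since distinct branches of the proper term $t$ lie in distinct $R$-classes (again by injectivity of the bijection), the class-probabilities isolate the individual branch probabilities, so the matched branches carry equal probability $\pi_k$. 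In both cases I would extract, for each matched pair of components, a sub-bisimulation that is again a bijection on subterms --- justified by the rigidity of proper basic terms (Lemma~\ref{lemma-basic-rigid}) via Lemma~\ref{lemma-rigid-bijection} --- apply the induction hypothesis to obtain derivable equality of the components, and reassemble the whole equation using A1 and A2 (respectively pA1 and pA2).

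I expect the main obstacle to be precisely this passage from \emph{bisimilar} to \emph{exactly matched}. The delicate points are (i)~arguing that the matching of summands and of probabilistic branches is bijective rather than merely surjective or many-to-one, which is exactly where the properness conditions in the definition of $\cB$ and the hypothesised bijection are used in tandem, and (ii)~recovering, for each matched component, the bijection-on-subterms needed to invoke the induction hypothesis. Everything else --- the base cases, the reassembly via the structural axioms, and the verification that matched branches carry equal probability once they are known to lie in distinct classes --- is routine given the properties of $P$ from Proposition~\ref{prop-bisim-congr}.
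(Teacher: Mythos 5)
Your proof takes exactly the route the paper intends: the paper's entire argument for this lemma is the single remark that it is ``straightforwardly proved by induction on the structure of $t$'', and your case analysis by outermost operator --- with the base and prefix cases, the upgrade from bisimilarity to an exact bijective matching of $\altc$-summands and of probabilistic branches using the properness conditions together with the hypothesised bijection, and reassembly via A1, A2, pA1, pA2 --- is a correct, much more detailed elaboration of that same structural induction. The points you single out as delicate (bijectivity of the matching and recovering a bijection-on-subterms for each matched component so the induction hypothesis applies) are indeed where the properness of the terms and the bijection hypothesis do the real work.
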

\begin{proof}
This is straightforwardly proved by induction on the structure of $t$.
\qed
\end{proof}

\begin{theorem}[Completeness]
\label{thm-completeness-pACP}
For all closed \pACP\ terms $t$ and $t'$, $t \bisim t'$ only if 
$t = t'$ is derivable from the axioms of \pACP.
\end{theorem}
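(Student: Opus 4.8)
The plan is to reduce completeness to a statement about proper basic terms, exploiting the normal-form result and the three preceding lemmas. The completeness statement asserts that semantic equivalence implies derivability for arbitrary closed terms, so the first move is to invoke Proposition~\ref{prop-elim-basic}: for any closed \pACP\ terms $t$ and $t'$ there exist proper basic terms $s$ and $s'$ with $t = s$ and $t' = s'$ derivable from the axioms of \pACP. By the soundness theorem (Theorem~\ref{thm-soundness-pACPr}, applied to \pACP, which is a subtheory of \pACPr), derivable equality implies $t \bisim s$ and $t' \bisim s'$. Hence, assuming $t \bisim t'$, transitivity of $\bisim$ gives $s \bisim s'$. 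If I can show $s = s'$ is derivable, then chaining $t = s$, $s = s'$, $s' = t'$ yields the desired derivation. So the whole problem collapses to the special case of two proper basic terms that are bisimulation equivalent.

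\smallskip

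For that special case I would argue as follows. Suppose $s \bisim s'$ with $s, s'$ proper basic terms, and let $R$ be a probabilistic bisimulation witnessing this, so $R(s,s')$. By Lemma~\ref{lemma-basic-rigid}, both $s$ and $s'$ are rigid. By Lemma~\ref{lemma-rigid-bijection}, the restriction of $R$ to the set of all subterms of $s$ is a bijection onto (a subset of) the relevant subterms. This is precisely the hypothesis of Lemma~\ref{lemma-bijection-deriv}, which then yields that $s = s'$ is derivable from axioms A1, A2, pA1, and pA2 --- a fortiori derivable from the full axiom set of \pACP. Combining this with the two eliminations gives the derivability of $t = t'$, completing the proof.

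\smallskip

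The main obstacle I anticipate is making sure the interface between the lemmas is applied correctly, in particular that Lemma~\ref{lemma-rigid-bijection} supplies exactly the bijectivity hypothesis that Lemma~\ref{lemma-bijection-deriv} consumes. Lemma~\ref{lemma-rigid-bijection} is phrased for a single bisimulation $R$ with $R(t,t')$ between rigid terms, whereas the maximal bisimulation $\bisim$ is what we obtain from $s \bisim s'$; here I would use Proposition~\ref{prop-bisim-max}, which tells us $\bisim$ is itself a probabilistic bisimulation, so I can take $R$ to be $\bisim$ (or any witnessing bisimulation) and feed it directly into the two rigidity lemmas. A secondary point to check is that the soundness theorem, although stated for \pACPr, applies to closed \pACP\ terms and \pACP-derivability, which is immediate since \pACP\ is a subtheory and every \pACP-derivation is a \pACPr-derivation. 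Beyond these bookkeeping matters the argument is a direct concatenation of the established results, so no genuinely new construction should be required.
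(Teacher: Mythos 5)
Your proposal is correct and follows essentially the same route as the paper's own proof: reduce to proper basic terms via Proposition~\ref{prop-elim-basic} together with soundness, then chain Lemmas~\ref{lemma-basic-rigid}, \ref{lemma-rigid-bijection}, and~\ref{lemma-bijection-deriv} to obtain derivability from A1, A2, pA1, and pA2. The extra care you take about which bisimulation witnesses the equivalence and about soundness applying to the \pACP\ fragment is sound bookkeeping that the paper leaves implicit.
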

\begin{proof}
By Proposition~\ref{prop-elim-basic} and 
Theorem~\ref{thm-soundness-pACPr}, it is sufficient to prove the 
theorem for proper basic terms $t$ and $t'$ of \pACP.
Assume that $t \bisim t'$.
Then, there exists a probabilistic bisimulation $R$ such that $R(t,t')$.
By Lemma~\ref{lemma-basic-rigid}, $t$ and $t'$ are rigid.
So, by Lemma~\ref{lemma-rigid-bijection}, the restriction of $R$ to the 
set of all subterms of $t$ is a bijection.
From this, by Lemma~\ref{lemma-bijection-deriv}, it follows that 
$t = t'$ is derivable from axioms A1, A2, pA1, and pA2.
\qed
\end{proof}

\subsection{Remarks Relating to the Semantics of \pACPr}
\label{subsect-remarks}

In this subsection, we make some remarks, relating to the operational 
semantics of \pACPr, that did not fit in very well at an earlier point.

\pACP\ is a minor variant of the subtheory of \pACPt\ from~\cite{AG09a} 
in which the operators for abstraction from some set of actions are 
lacking.
Soundness and completeness results with respect to branching 
bisimulation equivalence of an unspecified operational semantics of 
\pACPt\ are claimed in~\cite{AG09a}.
In principle, the operational semantics concerned should be derivable 
from the operational semantics of \pTCPt\ given in~\cite{Geo11a}.%
\footnote
{Recall that \pTCPt\ is \pACPt\ with the action constants replaced by
 action prefixing operators and a constant for the process that is only
 capable of terminating successfully.}
However, it turns out that a mistake has been made in the rules for the 
probabilistic choice operators that concern the relations 
$\pstep{}{\pi}{}$.
The mistake concerned manifests only in closed terms of the form 
$t \paltc{1/2} t$.  
For example, if $t$ is not derivably equal to a term whose outermost 
operator is a probabilistic choice operator, then both the left-hand 
side and the right-hand side of $t \paltc{1/2} t$ give rise to 
\smash{$\pstep{t \paltc{1/2} t}{1/2}{t}$}.
Consequently, the total probability that $t \paltc{1/2} t$ behaves as 
$t$ is $1/2$ instead of~$1$.
This is counterintuitive and inconsistent with axiom pA3.

A meadow has a total multiplicative inverse operation where the 
multiplicative inverse of zero is zero.
This is why there is no reason to exclude the probabilistic choice 
operators $\paltc{\pi}$ for $\pi \in \set{0,1}$ if a meadow is used 
instead of a field.
Because we have included these operators, we also have included 
relations $\pstep{}{\pi}{}$ for $\pi \in \set{0,1}$.
As a bonus of the inclusion of these relations, we could achieve that 
for all pairs $(t,t')$ of closed \pACPr\ terms, there exists a 
$\pi \in \Prob$ such that $\pstep{t}{\pi}{t'}$.
Due to this, we could at the same time reduce the number of rules for 
the operational semantics that concern the relations $\pstep{}{\pi}{}$,
replace all negative premises by positive premises in rules for the 
operational semantics that concern the relations $\astep{}{a}{}$ and
$\aterm{}{a}$, and correct the above-mentioned mistake in the rules for 
the probabilistic choice operators that concern the relations 
$\pstep{}{\pi}{}$.

Above, we already mentioned that a variant of \pACP, called \pACPpa, is 
presented in~\cite{And99a,And02a}.
\pACP, just like \pACPt\ from~\cite{AG09a}, differs from \pACPpa\ with 
respect to the parallel composition operator.
Moreover, in~\cite{And99a,And02a}, the probability distribution function 
is defined directly instead of via the operational semantics.
However, except for parallel composition and left merge, the probability 
distribution function corresponds to the probability distribution 
function $P$ defined above.
The direct definition of the probability distribution function removes 
the root of the above-mentioned mistake made in~\cite{Geo11a}. 

\section{Probabilistic Strategic Interleaving}
\label{sect-pSI}

In this section, we extend \pACP\ with probabilistic strategic 
interleaving, i.e.\ interleaving according to some probabilistic 
interleaving strategy.
Interleaving strategies are known as process-scheduling policies in the 
field of operating systems.
A well-known probabilistic process-scheduling policy is lottery 
scheduling~\cite{WW94a}.
In the presented extension of \pACP\ deterministic interleaving 
strategies are special cases of probabilistic interleaving strategies: 
they are the ones obtained by restriction to the trivial probabilities 
$0$ and $1$.

\subsection{Motivation for Strategic Interleaving}
\label{subsect-motivation}

In this subsection, the motivation for taking strategic interleaving 
into consideration is given.

The interest in strategic interleaving originates from an important 
feature of many contemporary programming languages, namely 
multi-threading.
In algebraic theories of processes, such as \ACP~\cite{BW90}, 
CCS~\cite{Mil89}, and CSP~\cite{Hoa85}, processes are discrete 
behaviours that proceed by doing steps in a sequential fashion.
In these theories, parallel composition of two processes is usually
interpreted as arbitrary interleaving of the steps of the processes 
concerned. 
Arbitrary interleaving turns out to be appropriate for many applications 
and to facilitate formal algebraic reasoning. 
Multi-threading as found in programming languages such as 
Java~\cite{GJSB00a} and C\#~\cite{HWG03a}, gives rise to parallel 
composition of processes.
In the case of multi-threading, however, the steps of the processes 
concerned are interleaved according to what is known as a 
process-scheduling policy in the field of operating systems.

Arbitrary interleaving and strategic interleaving are quite different.
The following points illustrate this: (a) whether the interleaving of 
certain processes leads to inactiveness depends on the interleaving 
strategy used; (b) sometimes inactiveness occurs with a particular 
interleaving strategy whereas arbitrary interleaving would not lead to 
inactiveness and vice versa.
Nowadays, multi-threading is often used in the implementation of 
systems.
Because of this, in many systems, for instance hardware/software 
systems, we have to do with parallel processes that may best be 
considered to be interleaved in an arbitrary way as well as parallel 
processes that may best be considered to be interleaved according to 
some interleaving strategy.
Such applications potentially ask for a process algebra that supports 
both arbitrary interleaving and strategic interleaving.

\subsection{\pACP\ with Probabilistic Strategic Interleaving}
\label{subsect-pSI}

In the extension of \pACP\ with probabilistic strategic interleaving 
presented below, it is expected that an interleaving strategy uses the 
interleaving history in one way or another to make process-scheduling 
decisions.

The sets $\Hist_n$ of \emph{interleaving histories for $n$ processes}, 
for $n \in \Natpos$, are the subsets of $\seqof{(\Natpos \x \Natpos)}$ 
that are inductively defined by the following rules:%
\footnote
{The special sequence notation used in this paper is explained in an
 appendix.}
\begin{itemize}
\item
$\emptyseq \in \Hist_n$;
\item
if $i \leq n$, then $\tup{i,n} \in \Hist_n$;
\item
if $h \concat \tup{i,n} \in \Hist_n$, $j \leq n$, and 
$n - 1 \leq m \leq n + 1$, then 
$h \concat \tup{i,n} \concat \tup{j,m} \in \Hist_m$.
\end{itemize}
The intuition concerning interleaving histories is as follows:
if the $k$th pair of an interleaving history is $\tup{i,n}$, then the 
$i$th process got a turn in the $k$th interleaving step and after its
turn there were $n$ processes to be interleaved.
The number of processes to be interleaved may increase due to process
creation (introduced below) and decrease due to successful termination 
of processes.
 
The presented extension of \pACP\ is called \pACPpSI\ (\pACP\ with 
probabilistic Strategic Interleaving). 
It covers a generic probabilistic interleaving strategy that can be 
instantiated with different specific probabilistic interleaving 
strategies that can be represented in the way that is explained below.

In \pACPpSI, it is assumed that the following has been given:%
\footnote
{We write $\pfunct{f}{A}{B}$ to indicate that $f$ is a partial function
 from $A$ to $B$.}

\begin{itemize}
\item
a fixed but arbitrary set $S$; 
\item
a fixed but arbitrary partial function 
$\pfunct{\sched{n}}{\Hist_n \x S}{(\set{1,\ldots,n} \to \Prob)}$ 
for each \linebreak[2] $n \in \Natpos$;
\item
a fixed but arbitrary total function 
$\funct{\updat{n}}
  {\Hist_n \x S \x \set{1,\ldots,n} \x \Act \x \set{0,1}}{S}$ 
for \linebreak[2] each $n \in \Natpos$;
\item
a fixed but arbitrary set $C \subset \Act$;
\end{itemize}
where, for each $n \in \Natpos$:
\begin{itemize}
\item
for each $h \in \Hist_n$ and $s \in S$, 
$\sum_{i=1}^n \sched{n}(h,s)(i) = 1$;
\item
for each $h \in \Hist_n$, $s \in S$, $i \in \set{1,\ldots,n}$, and
$a \in \Act \diff C$, $\updat{n}(h,s,i,a,0) = s$;
\item
for each $c \in C$, $\ol{c} \in \Act \diff C$ and, 
for each $a,b \in \Act$, 
$\commf(a,b) \neq c$, $\commf(a,b) \neq \ol{c}$,
$\commf(a,c) = \dead$, and $\commf(a,\ol{c}) = \dead$.
\end{itemize}
The elements of $S$ are called \emph{control states}, $\sched{n}$ is 
called an \emph{abstract scheduler} (\emph{for $n$ processes}),  
$\updat{n}$ is called a \emph{control state transformer} (\emph{for 
$n$ processes}), and 
the elements of $C$ are called \emph{control actions}.
The intuition concerning $S$, $\sched{n}$, $\updat{n}$, and $C$ is as 
follows:
\begin{itemize}
\item
the control states from $S$ encode data that are relevant to the 
interleaving strategy, but not derivable from the interleaving history;
\item
if $\sched{n}(h,s) = i$, then the $i$th process gets the next turn after 
interleaving history $h$ in control state $s$;
\item
if $\sched{n}(h,s)$ is undefined, then no process gets the next turn 
after interleaving history $h$ in control state $s$; 
\item
if $\updat{n}(h,s,i,a,0) = s'$, then $s'$ is the control state that 
arises from the $i$th process doing $a$ after interleaving 
history $h$ in control state $s$ in the case that doing $a$ does not 
bring the $i$th process to successful termination;
\item
if $\updat{n}(h,s,i,a,1) = s'$, then $s'$ is the control state that 
arises from the $i$th process doing $a$ after interleaving 
history $h$ in control state $s$ in the case that doing $a$ brings the 
$i$th process to successful termination;
\item
if $a \in C$, then $a$ is an explicit means to bring about a control 
state change and $\ol{a}$ is left as a trace after $a$ has been dealt 
with. 
\end{itemize}
Thus, $S$, $\indfam{\sched{n}}{n \in \Natpos}$,  
$\indfam{\updat{n}}{n \in \Natpos}$, and $C$ together represent an 
interleaving strategy.
This way of representing an interleaving strategy is engrafted 
on~\cite{SS00a}.

Consider the case where $S$ is a singleton set, 
for each $n \in \Natpos$, $\sched{n}$ is defined~by 
\begin{ldispl}
\begin{gceqns}
\sched{n}(\emptyseq,s)(i) = 1\;
 & \mathrm{if} \;i = 1\;,  \\
\sched{n}(\emptyseq,s)(i) = 0\;
 & \mathrm{if} \;i \neq 1\;, \\
\sched{n}(h \concat \tup{j,n},s)(i) = 1\;
 & \mathrm{if} \;i = (j \bmod n) + 1\;, \\
\sched{n}(h \concat \tup{j,n},s)(i) = 0\;
 & \mathrm{if} \;i \neq (j \bmod n) + 1
\end{gceqns}
\end{ldispl}%
and, for each $n \in \Natpos$, $\updat{n}$ is defined by 
\begin{ldispl}
\updat{n}(h,s,i,a,f) = s\;.
\end{ldispl}%
In this case, the interleaving strategy corresponds to the round-robin 
scheduling algorithm.
This deterministic interleaving strategy is called cyclic interleaving 
in our work on interleaving strategies in the setting of thread algebra 
(see e.g.~\cite{BM04c}).
In the current setting, an interleaving strategy is deterministic if, 
for all $n \in \Natpos$, for all $h \in \Hist_n$, $s \in S$, and 
$i \in \set{1,\ldots,n}$, $\sched{n}(h,s)(i) \in \set{0,1}$.
In the case that $S$ and $\updat{n}$ are as above, but $\sched{n}$ is 
defined by
\begin{ldispl}
\sched{n}(h,s)(i) = 1 / n\;,
\end{ldispl}%
the interleaving strategy is a purely probabilistic one.
The probability distribution used is a uniform distribution.

More advanced strategies can be obtained if the scheduling makes more 
advanced use of the interleaving history and the control state.
The interleaving history may, for example, be used to factor the 
individual lifetimes of the processes to be interleaved or their 
creation hierarchy into the process-scheduling decision making.
Individual properties of the processes to be interleaved that depend on 
actions performed by them can be taken into account by making use of the 
control state.
The control state may, for example, be used to factor whether a process 
is currently waiting to acquire a lock from a process that manages a 
shared resource into the process-scheduling decision making.
An example of a probabilistic interleaving strategy supporting mutual 
exclusion of critical subprocesses is given in 
Section~\ref{subsect-mutex}.

In \pACPpSI, it is also assumed that a fixed but arbitrary set $D$ of 
\emph{data} and a fixed but arbitrary function $\funct{\crea}{D}{P}$, 
where $P$ is the set of all closed terms over the signature of \pACPpSI\ 
(given below), have been given and that, for each $d \in D$ and 
$a, b \in \Act$, $\pcr(d),\rcr(d) \in \Act$, 
$\commf(\pcr(d),a) = \dead$, and $\commf(a,b) \neq \pcr(d)$.
The action $\pcr(d)$ can be considered a process creation request and 
the action $\rcr(d)$ can be considered a process creation act.
They represent the request to start the process denoted by $\crea(d)$ in 
parallel with the requesting process and the act of carrying out that 
request, respectively.

The signature of \pACPpSI\ consists of the constants and operators
from the signature of \pACP\ and in addition the following operators:
\pagebreak[2]
\begin{itemize}
\item
for each $n \in \Natpos$, $h \in \Hist_n$, and $s \in S$,
the $n$-ary \emph{strategic interleaving} operator $\siop{n}{h}{s}$;
\item
for each $n,i \in \Natpos$ with $i \leq n$, $h \in \Hist_n$, and 
$s \in S$,
the $n$-ary \emph{positional strategic interleaving} operator
$\posmop{n}{i}{h}{s}$.
\end{itemize}

The strategic interleaving operators can be explained as follows:
\begin{itemize}
\item
a closed term of the form $\si{n}{h}{s}{t_1,\ldots,t_n}$ denotes the 
process that results from interleaving of the $n$ processes denoted by 
$t_1,\ldots,t_n$ after interleaving history $h$ in control state $s$, 
according to the interleaving strategy represented by $S$, 
$\indfam{\sched{n}}{n \in \Natpos}$, and 
$\indfam{\updat{n}}{n \in \Natpos}$.
\end{itemize}
The positional strategic interleaving operators are auxiliary operators 
used to axiomatize the strategic interleaving operators.
The role of the positional strategic interleaving operators in the 
axiomatization is similar to the role of the left merge operator found 
in \pACP.

The axioms of \pACPpSI\ are the axioms of \pACP\ and in addition the 
equations given in Table~\ref{axioms-pSI}.
\begin{table}[!t]
\caption{Axioms for strategic interleaving}
\label{axioms-pSI}
\begin{eqntbl}
\begin{axcol}
x_1 = x_1 \altc x_1 \Land \ldots \Land x_1 = x_n \altc x_n \Limpl 
\\ \quad
\si{n}{h}{s}{x_1,\ldots,x_n} = \dead
\hfill \mif \sched{n}(h,s) \mathrm{\;is\;undefined}   & \axiom{SI0'} \\
x_1 = x_1 \altc x_1 \Land \ldots \Land x_1 = x_n \altc x_n \Limpl 
\\ \quad
\si{n}{h}{s}{x_1,\ldots,x_n} = 
\PRC{i = 1}{n}{\sched{n}(h,s)(i)}{\posm{n}{i}{h}{s}{x_1,\ldots,x_n}}
\hfill\;\;
       \mif \sched{n}(h,s) \mathrm{\;is\;defined}\phantom{\mathrm{un}}       
                                                      & \axiom{SI1'} \\
\posm{n}{i}{h}{s}{x_1,\ldots,x_{i-1},\dead,x_{i+1},\ldots,x_n} = \dead
                                                      & \axiom{SI2}  \\
\posm{1}{i}{h}{s}{a} = a                              & \axiom{SI3}  \\
\posm{n+1}{i}{h}{s}{x_1,\ldots,x_{i-1},a,x_{i+1},\ldots,x_{n+1}} =
\\ \quad
a \seqc
\si{n}{h \concat \tup{i,n}}{\updat{n+1}(h,s,i,a,1)}
 {x_1,\ldots,x_{i-1},x_{i+1},\ldots,x_{n+1}}          & \axiom{SI4}  \\
\posm{n}{i}{h}{s}{x_1,\ldots,x_{i-1},a \seqc x_i',x_{i+1},\ldots,x_n} =
\\ \quad
a \seqc
\si{n}{h \concat \tup{i,n}}{\updat{n}(h,s,i,a,0)}
 {x_1,\ldots,x_{i-1},x_i',x_{i+1},\ldots,x_n}         & \axiom{SI5}  \\
\posm{n}{i}{h}{s}{x_1,\ldots,x_{i-1},\pcr(d),x_{i+1},\ldots,x_n} =
\\ \quad
\rcr(d) \seqc
\si{n}{h \concat \tup{i,n}}{\updat{n}(h,s,i,\pcr(d),1)}
 {x_1,\ldots,x_{i-1},x_{i+1},\ldots,x_n,\crea(d)}     & \axiom{SI6}  \\
\posm{n}{i}{h}{s}
 {x_1,\ldots,x_{i-1},\pcr(d) \seqc x_i',x_{i+1},\ldots,x_n} =
\\ \quad
\rcr(d) \seqc
\si{n+1}{h \concat \tup{i,n+1}}{\updat{n}(h,s,i,\pcr(d),0)}
 {x_1,\ldots,x_{i-1},x_i',x_{i+1},\ldots,x_n,\crea(d)} 
                                                      & \axiom{SI7}  \\ 
\posm{n}{i}{h}{s}
 {x_1,\ldots,x_{i-1},x_i' \altc x_i'',x_{i+1},\ldots,x_n} =
\\ \quad
\posm{n}{i}{h}{s}{x_1,\ldots,x_{i-1},x_i',x_{i+1},\ldots,x_n} \altc
\posm{n}{i}{h}{s}{x_1,\ldots,x_{i-1},x_i'',x_{i+1},\ldots,x_n} 
                                                      & \axiom{SI8}  \\ 
\multicolumn{2}{@{}c@{}}{\dotfill} \\[1ex] 
\si{n}{h}{s}
 {x_1,\ldots,x_{i-1},x_i' \paltc{\pi} x_i'',x_{i+1},\ldots,x_n} =
\\ \quad
\si{n}{h}{s}{x_1,\ldots,x_{i-1},x_i',x_{i+1},\ldots,x_n}
 \paltc{\pi}
\si{n}{h}{s}{x_1,\ldots,x_{i-1},x_i'',x_{i+1},\ldots,x_n} 
                                                      & \axiom{pSI1} \\
\posm{n}{i}{h}{s}
 {x_1,\ldots,x_{i-1},x_i' \paltc{\pi} x_i'',x_{i+1},\ldots,x_n} =
\\ \quad
\posm{n}{i}{h}{s}{x_1,\ldots,x_{i-1},x_i',x_{i+1},\ldots,x_n}
 \paltc{\pi}
\posm{n}{i}{h}{s}{x_1,\ldots,x_{i-1},x_i'',x_{i+1},\ldots,x_n} 
                                                      & \axiom{pSI2} 
\end{axcol}
\end{eqntbl}
\end{table}
In the additional equations, 
$n$ and $i$ stand for arbitrary numbers from $\Natpos$, 
$h$ stands for an arbitrary interleaving history from $\Hist$, 
$s$ stands for an arbitrary control state from $S$, 
$a$ stands for an arbitrary action constant that is not of the form 
$\pcr(d)$ or $\rcr(d)$, and 
$d$ stands for an arbitrary datum $d$ from~$D$.

The equations in Table~\ref{axioms-pSI} above the dotted line are 
similar to the axioms for strategic interleaving presented 
in~\cite{BM17b} for the deterministic case.
The difference between \axiom{SI1} from that paper and the consequent 
of \axiom{SI1'} is unavoidable because probabilistic interleaving 
strategies are not covered there.
The other differences are due to the finding that the generic 
interleaving strategy from~\cite{BM17b} cannot be instantiated with: 
(a)~interleaving strategies where the data relevant to the 
process-scheduling decision making may be such that none of the 
processes concerned can be given a turn,
(b)~interleaving strategies where the data relevant to the 
process-scheduling decision making must be updated on successful 
termination of one of the processes concerned, and 
(c)~interleaving strategies where the process-scheduling decision
making may be adjusted by steps of the processes concerned that are 
solely intended to change the data relevant to the process-scheduling 
decision making.

Axiom SI2 expresses that, in the event of inactiveness of the process 
whose turn it is, the whole becomes inactive immediately.
A plausible alternative is that, in the event of inactiveness of the 
process whose turn it is, the whole becomes inactive only after all 
other processes have terminated or become inactive.
In that case, the functions 
$\funct{\updat{n}}
  {\Hist \x S \x \set{1,\ldots,n} \x \Act \x \set{0,1}}{S}$ 
must be extended to functions 
$\funct{\updat{n}}
 {\Hist \x S \x \set{1,\ldots,n} \x (\Act \union \set{\dead}) \x
  \set{0,1}}{S}$
and axiom SI2 must be replaced by the axioms in 
Table~\ref{axioms-alt-inactive}.
\begin{table}[!t]
\caption{Alternative axioms for SI2}
\label{axioms-alt-inactive}
\begin{eqntbl}
\begin{axcol}
\posm{1}{i}{h}{s}{\dead} = \dead                       & \axiom{SI2a} \\
\posm{n+1}{i}{h}{s}{x_1,\ldots,x_{i-1},\dead,x_{i+1},\ldots,x_{n+1}} = 
\\ \qquad
\si{n}{h \concat \tup{i,n}}{\updat{n+1}(h,s,i,\dead,0)}
      {x_1,\ldots,x_{i-1},x_{i+1},\ldots,x_{n+1}} \seqc \dead         
                                                       & \axiom{SI2b} 
\end{axcol}
\end{eqntbl}
\end{table}

In \pACPpSIr, i.e.\ \pACPpSI\ extended with guarded recursion in the way 
described in Section~\ref{subsect-recursion}, the processes that can be 
created are restricted to the ones denotable by a closed \pACPpSI\ term.
This restriction stems from the requirement that $\crea$ is a function 
from $D$ to the set of all closed \pACPpSI\ terms.
The restriction can be removed by relaxing this requirement to the 
requirement that $\crea$ is a function from $D$ to the set of all closed 
\pACPpSIr\ terms. 
We write \pACPpSIrp\ for the theory resulting from this relaxation.
In other words, \pACPpSIrp\ differs from \pACPpSIr\ in that it is assumed 
that a fixed but arbitrary function $\funct{\crea}{D}{P}$, where $P$ is 
the set of all closed terms over the signature of \pACPpSIr, has been 
given.

\subsection{Semantics of \pACPpSI\ with Guarded Recursion}
\label{subsect-semantics-pSI}

In this subsection, we present a structural operational semantics of 
\pACPpSI\ with guarded recursion.

The structural operational semantics of \pACPpSIrp\ is described by the
rules for the operational semantics of \pACPr\ (given in 
Tables~\ref{rules-pACPr-1} and~\ref{rules-pACPr-2}) and in addition the 
rules given in Table~\ref{rules-pSI}.
\begin{table}[!t]
\caption{Additional rules for the operational semantics of \pACPpSIrp}
\label{rules-pSI}
\begin{ruletbl}
\Rule
{\aterm{x}{a}}
{\aterm{\posm{1}{1}{h}{s}{x}}{a}}
\\
\Rule
{\pstep{x_1}{1}{x_1'},\; \ldots,\; \pstep{x_{i-1}}{1}{x_{i-1}'},\; 
 \aterm{x_i}{a},\;
 \pstep{x_{i+1}}{1}{x_{i+1}'},\; \ldots,\; \pstep{x_{n+1}}{1}{x_{n+1}'}}
{\astep{\posm{n+1}{i}{h}{s}{x_1,\ldots,x_{n+1}}}{a}
 {\si{n}{h \concat \tup{i,n}}{\updat{n+1}(h,s,i,a,1)}
   {x_1,\ldots,x_{i-1},x_{i+1},\ldots,x_{n+1}}}}
\\
\Rule
{\pstep{x_1}{1}{x_1'},\; \ldots,\; \pstep{x_{i-1}}{1}{x_{i-1}'},\; 
 \astep{x_i}{a}{x'_i}, \;
 \pstep{x_{i+1}}{1}{x_{i+1}'},\; \ldots,\; \pstep{x_{n}}{1}{x_{n}'}}
{\astep{\posm{n}{i}{h}{s}{x_1,\ldots,x_n}}{a}
 {\si{n}{h \concat \tup{i,n}}{\updat{n}(h,s,i,a,0)}
   {x_1,\ldots,x_{i-1},x_i',x_{i+1},\ldots,x_n}}}
\\
\Rule
{\pstep{x_1}{1}{x_1'},\; \ldots,\; \pstep{x_{i-1}}{1}{x_{i-1}'},\; 
 \aterm{x_i}{\pcr(d)}, \;
 \pstep{x_{i+1}}{1}{x_{i+1}'},\; \ldots,\; \pstep{x_{n}}{1}{x_{n}'}}
{\astep{\posm{n}{i}{h}{s}{x_1,\ldots,x_n}}{\rcr(d)}
 {\si{n}{h \concat \tup{i,n}}{\updat{n}(h,s,i,\pcr(d),1)}
   {x_1,\ldots,x_{i-1},x_{i+1},\ldots,x_n,\crea(d)}}}
\\
\Rule
{\pstep{x_1}{1}{x_1'},\; \ldots,\; \pstep{x_{i-1}}{1}{x_{i-1}'},\; 
 \astep{x_i}{\pcr(d)}{x'_i}, \;
 \pstep{x_{i+1}}{1}{x_{i+1}'},\; \ldots,\; \pstep{x_{n}}{1}{x_{n}'}}
{\astep{\posm{n}{i}{h}{s}{x_1,\ldots,x_n}}{\rcr(d)}
 {\si{n+1}{h \concat \tup{i,n+1}}{\updat{n}(h,s,i,\pcr(d),0)}
   {x_1,\ldots,x_{i-1},x_i',x_{i+1},\ldots,x_n,\crea(d)}}}
\\
\RuleC
{\pstep{x_1}{\pi_1}{x_1'},\; \ldots,\; \pstep{x_{n}}{\pi_n}{x_{n}'}}
{\pstep{\si{n}{h}{s}{x_1,\ldots,x_{n}}}
 {\sched{n}(h,s)(i) \mul \pi_1 \mul \ldots \mul \pi_n}
 {\posm{n}{i}{h}{s}{x_1',\ldots,x_{n}'}}}
{\sched{n}(h,s) \mathrm{\;is\;defined}}
\\
\Rule
{\pstep{x_1}{\pi_1}{x_1'},\;\ldots,\;\pstep{x_{n}}{\pi_n}{x_{n}'}}
{\pstep{\posm{n}{i}{h}{s}{x_1,\ldots,x_{n}}}
 {\pi_1 \mul \ldots \mul \pi_n}
 {\posm{n}{i}{h}{s}{x_1',\ldots,x_{n}'}}}
\end{ruletbl}
\end{table}
In the additional rules, 
$n$ and $i$ stand for arbitrary numbers from $\Natpos$, 
$h$ stands for an arbitrary interleaving history from $\Hist$, 
$s$ stands for an arbitrary control state from $S$, 
$a$ stands for an arbitrary action from $\Act$ that is not of the form 
$\pcr(d)$ or $\rcr(d)$, 
$d$ stands for an arbitrary datum $d$ from $D$, and
$\pi_1$, \ldots, $\pi_n$ stand for arbitrary probabilities from~$\Prob$. 

\begin{proposition}
\label{prop-bisim-congr-pSI}
${} \bisim {}$ is a congruence w.r.t.\ the operators of \pACPpSIrp.
\end{proposition}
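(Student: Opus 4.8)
The plan is to follow the pattern of the proof of Proposition~\ref{prop-bisim-congr}. The constructions there for the operators of \pACP\ carry over to the setting of \pACPpSIrp, since the rules for those operators are unchanged, so it remains to treat only the two new families of operators: the strategic interleaving operators $\siop{n}{h}{s}$ and the positional strategic interleaving operators $\posmop{n}{i}{h}{s}$. For these I would exhibit a single equivalence relation $R$ and show that it is a probabilistic bisimulation; since $\bisim$ is the maximal probabilistic bisimulation (Proposition~\ref{prop-bisim-max}), this gives $\si{n}{h}{s}{t_1,\ldots,t_n} \bisim \si{n}{h}{s}{t_1',\ldots,t_n'}$ and the analogue for the positional operators whenever $t_k \bisim t_k'$ for all $k$. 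Concretely I would take $R = (\bisim \union A \union B)\eqvcl$, where $A$ consists of all pairs $\tup{\si{n}{h}{s}{t_1,\ldots,t_n},\si{n}{h}{s}{t_1',\ldots,t_n'}}$ with $\LAND{1 \leq k \leq n} t_k \bisim t_k'$, and $B$ consists of all pairs $\tup{\posm{n}{i}{h}{s}{t_1,\ldots,t_n},\posm{n}{i}{h}{s}{t_1',\ldots,t_n'}}$ with $\LAND{1 \leq k \leq n}(t_k \bisim t_k' \Land \pstep{t_k}{1}{t_k} \Land \pstep{t_k'}{1}{t_k'})$, the numbers $n,i$, the history $h$, and the control state $s$ ranging over all admissible values.

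The restriction in $B$ to arguments satisfying $\pstep{}{1}{}$ is essential and mirrors the way probabilistic choices are resolved before interleaving steps are enacted. I would first establish, by a routine induction on term structure using Proposition~\ref{prop-pstep-1}, that every positive-probability target of a probabilistic step is of this kind: if $\pstep{t}{\pi}{u}$ and $\pi \neq 0$, then $\pstep{u}{1}{u}$. This ensures that $R$ is closed under the transitions. Indeed, a term $\si{n}{h}{s}{t_1,\ldots,t_n}$ has neither action steps nor termination; its only positive-probability steps are the scheduler steps to positional terms $\posm{n}{i}{h}{s}{u_1,\ldots,u_n}$, and there each $u_k$ is a positive-probability target of $t_k$, hence satisfies $\pstep{u_k}{1}{u_k}$, so the resulting positional pairs lie in $B$. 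As in Proposition~\ref{prop-bisim-max}, the three bisimulation conditions are preserved under the equivalence closure, so it suffices to check them for the pairs in $A$ and $B$.

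The conditions on $\aterm{}{a}$ and $\astep{}{a}{}$ are the bookkeeping part. Termination occurs only for a positional term $\posm{1}{1}{h}{s}{t_1}$, precisely when $\aterm{t_1}{a}$, which is matched because $t_1 \bisim t_1'$. For an action step of a positional term in $B$, all off-turn arguments satisfy $\pstep{}{1}{}$ on both sides by construction, so the step is governed entirely by the on-turn argument $t_i$; since $t_i \bisim t_i'$ this step is matched, and the resulting targets are strategic interleaving terms. One then checks that their argument tuples remain componentwise bisimilar: a residual $r$ of $t_i$ is matched by a residual $r'$ of $t_i'$ with $r \bisim r'$, a terminated argument is dropped on both sides, and any argument $\crea(d)$ added by process creation is identical on both sides; the new history $h \concat \tup{i,n}$ and the new control state $\updat{n}(h,s,i,a,f)$ coincide on both sides, as they depend only on the common $h,s,i,a,f$. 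Hence these targets lie in $A$.

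The condition on the probability distribution function $P$ is the genuinely new computation and the part where essentially all the work lies. I would establish the $n$-ary analogues of the $P$-identities used in the proof of Proposition~\ref{prop-bisim-congr}, namely $P(\si{n}{h}{s}{t_1,\ldots,t_n},\posm{n}{i}{h}{s}{T_1,\ldots,T_n}) = \sched{n}(h,s)(i) \mul P(t_1,T_1) \mul \cdots \mul P(t_n,T_n)$ and $P(\posm{n}{i}{h}{s}{t_1,\ldots,t_n},\posm{n}{i}{h}{s}{T_1,\ldots,T_n}) = P(t_1,T_1) \mul \cdots \mul P(t_n,T_n)$, where $\posm{n}{i}{h}{s}{T_1,\ldots,T_n}$ denotes the set of positional terms whose $k$th argument lies in $T_k$. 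Since a strategic interleaving term carries positive probability only to positional terms, and $R$ relates two such terms exactly when they share $n,i,h,s$ and have componentwise bisimilar arguments, the additivity of $P$ already used in Proposition~\ref{prop-bisim-max} lets each relevant $R$-class factor into the product of the per-argument classes $[u_k]_\bisim$. The required equality $P(\si{n}{h}{s}{t_1,\ldots,t_n},[w]_R) = P(\si{n}{h}{s}{t_1',\ldots,t_n'},[w]_R)$ then reduces to the factor-wise equalities $P(t_k,[u_k]_\bisim) = P(t_k',[u_k]_\bisim)$, which hold because $t_k \bisim t_k'$; the positional case is analogous. The main obstacle is getting this factorisation right for the $n$-ary operators, in particular the interaction of the scheduler weight $\sched{n}(h,s)(i)$ with the product over the $n$ arguments and the care needed to count only stable targets, the action and termination conditions being comparatively routine.
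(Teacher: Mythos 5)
Your plan is the one the paper itself follows: its proof of this proposition consists of the single remark that it goes along the same lines as the proof of Proposition~\ref{prop-bisim-congr}, i.e.\ one constructs for each new operator an equivalence relation out of componentwise related argument tuples, verifies the conditions on $\aterm{}{a}$ and $\astep{}{a}{}$ directly, and discharges the condition on $P$ by product identities for $P$. Your $n$-ary identities, including the factor $\sched{n}(h,s)(i)$ for the strategic interleaving operators, are exactly the right analogues of the binary identities listed in that earlier proof, and your observation that positive-probability targets are stable is the right tool for seeing that the constructed relation is closed under the transitions.

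The one point where you deviate from that template, and where your argument as written falls short of the stated proposition, is the restriction of $B$ to argument tuples satisfying $\pstep{t_k}{1}{t_k}$ and $\pstep{t_k'}{1}{t_k'}$. In the proof of Proposition~\ref{prop-bisim-congr} the relations $R_1 \diamond R_2$ contain \emph{all} pairs $\tup{t_1 \diamond t_2,t_1' \diamond t_2'}$ with componentwise related arguments; no stability condition is imposed, and none is imposed for the left merge or communication merge either, even though their action rules are likewise only applicable to resolved arguments. With your restriction, $R$ never relates $\posm{n}{i}{h}{s}{t_1,\ldots,t_n}$ to $\posm{n}{i}{h}{s}{t_1',\ldots,t_n'}$ when some argument is initially probabilistic, so you obtain congruence for the positional strategic interleaving operators only on stable arguments, whereas the proposition asserts congruence with respect to all operators of \pACPpSIrp\ on all closed terms. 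The restriction is also not what makes $R$ closed under transitions: the probabilistic transitions of the additional pairs lead, by the last rule of Table~\ref{rules-pSI} together with your stability lemma, to pairs that are again of the form you collect in $B$. So the stability requirement should be dropped from the definition of $B$ and retained only as a case distinction in the verification (the action and termination clauses are where it actually does work); with that change the argument coincides with the paper's intended proof.
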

\begin{proof}
The proof goes along the same line as the proof of 
Proposition~\ref{prop-bisim-congr} 
\qed
\end{proof}

\pACPpSIrp\ is sound with respect to probabilistic bisimulation 
equivalence for equations between closed terms.
\begin{theorem}[Soundness]
\label{thm-soundness-pACPpSIrp}
For all closed \pACPpSIrp\ terms $t$ and $t'$, $t = t'$ is derivable 
from the axioms of \pACPpSIrp\ only if $t \bisim t'$.
\end{theorem}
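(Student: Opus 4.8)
The plan is to follow exactly the structure of the proof of Theorem~\ref{thm-soundness-pACPr}. Having established in Proposition~\ref{prop-bisim-congr-pSI} that ${}\bisim{}$ is a congruence with respect to all operators of $\pACPpSIrp$, it suffices to verify the soundness of each individual axiom, i.e.\ to show that for every axiom the relation induced by all its closed substitution instances is (or can be closed up into) a probabilistic bisimulation. Since $\pACPpSIrp$ consists of the axioms of $\pACP$ together with RDP, RSP, and the strategic-interleaving axioms of Table~\ref{axioms-pSI}, and the soundness of the $\pACP$ axioms, of RDP, and of RSP was already handled in the proof of Theorem~\ref{thm-soundness-pACPr}, the only genuinely new work concerns the equations SI0', SI1', SI2--SI8, pSI1, and pSI2.

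I would split these into the purely equational axioms (SI2--SI8, pSI1, pSI2) and the two conditional axioms (SI0' and SI1'). For each equational axiom $e$ I would form the equivalence closure $R_e$ of its set of closed substitution instances, exactly as in Theorem~\ref{thm-soundness-pACPr}, and check the three bisimulation conditions. The conditions on $\aterm{}{a}$ and $\astep{}{a}{}$ are read off directly by matching the operational rules of Table~\ref{rules-pSI} against the right-hand sides of the axioms: for SI3--SI7 the action rules for the positional operator $\posmop{n}{i}{h}{s}$ are tailored to reproduce precisely the transitions prescribed by those axioms, SI8 is handled by the way a summand $x_i' \altc x_i''$ distributes its initial steps, and pSI1/pSI2 are immediate since probabilistic choice commutes through the (positional) strategic interleaving operators by the rule for $\pstep{}{\pi}{}$ combined with pCM-style reasoning. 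For the condition on the probability distribution function $P$, I would establish, as auxiliary identities in the style of the list in the proof of Proposition~\ref{prop-bisim-congr}, that $P$ factors multiplicatively through the strategic interleaving operators, namely
\begin{ldispl}
P(\si{n}{h}{s}{t_1,\ldots,t_n},\si{n}{h}{s}{T_1,\ldots,T_n})
 = {\textstyle\prod_{k=1}^{n}} P(t_k,T_k)\;,
\end{ldispl}
and the analogous identity for $\posmop{n}{i}{h}{s}$, using Propositions~\ref{prop-pstep-1} and~\ref{prop-pstep-2} to guarantee well-definedness and the disjoint-union additivity property of $P$ recalled in Proposition~\ref{prop-bisim-max}.

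For the conditional axioms SI0' and SI1' I would proceed as with CM1' in Theorem~\ref{thm-soundness-pACPr}: restrict the candidate relation to instances whose process arguments satisfy the antecedents $x_k = x_k \altc x_k$, which by Proposition~\ref{prop-bisim-pstep-one} corresponds semantically to $\pstep{x_k}{1}{x_k}$, i.e.\ to process arguments that are not initially probabilistic. Under this restriction, the premises $\pstep{x_k}{1}{x_k'}$ appearing in the action rules of Table~\ref{rules-pSI} are met, and the scheduler rule for $\pstep{}{\pi}{}$ shows that $\si{n}{h}{s}{x_1,\ldots,x_n}$ resolves into the positional summands with probabilities $\sched{n}(h,s)(i)$; when $\sched{n}(h,s)$ is undefined the operator has no outgoing probabilistic transition other than the default one and behaves as $\dead$, validating SI0', while when it is defined the induced distribution matches the right-hand probabilistic choice $\PRC{i=1}{n}{\sched{n}(h,s)(i)}{\posm{n}{i}{h}{s}{x_1,\ldots,x_n}}$ of SI1', using that $\sum_{i=1}^{n}\sched{n}(h,s)(i) = 1$.

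The main obstacle I expect is the verification of the $P$-condition for SI1'. Here the probabilistic content is nontrivial: one must confirm that the single scheduler rule, whose outgoing probability is $\sched{n}(h,s)(i)\mul\pi_1\mul\cdots\mul\pi_n$, aggregates across all choices of target tuple $\tup{x_1',\ldots,x_n'}$ and all scheduler indices $i$ to the same distribution as the right-hand nested probabilistic choice over positional operators. This requires combining the multiplicative factorisation of $P$ through the arguments with the additivity over the index $i$ supplied by $\sum_i \sched{n}(h,s)(i) = 1$, and care that the normalising denominators implicit in the $\PRC{}{}{}{}$ notation cancel correctly — precisely the kind of meadow identity (with $0/0 = 0$) that makes the probabilistic choice operators for the trivial probabilities behave well. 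Everything else is a routine but lengthy inspection of rules against axioms, entirely parallel to the earlier soundness proof.
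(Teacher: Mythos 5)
Your proposal follows exactly the paper's approach: the paper's own proof of this theorem is a one-line deferral to the proof of Theorem~\ref{thm-soundness-pACPr}, and your plan is precisely the elaboration of that deferral --- congruence via Proposition~\ref{prop-bisim-congr-pSI}, per-axiom relations $R_e$ closed under equivalence, the $\axiom{CM1'}$-style restriction (via Proposition~\ref{prop-bisim-pstep-one}) for the conditional axioms \axiom{SI0'} and \axiom{SI1'}, and auxiliary identities for $P$ in the style of Proposition~\ref{prop-bisim-congr}. One detail to repair: your displayed identity for the strategic interleaving operator cannot have $\si{n}{h}{s}{T_1,\ldots,T_n}$ as its second argument, because by the scheduler rule in Table~\ref{rules-pSI} every probabilistic transition out of $\si{n}{h}{s}{t_1,\ldots,t_n}$ targets a \emph{positional} term, so that $P$ applied to the pair you wrote is $0$; the identity you need is $P(\si{n}{h}{s}{t_1,\ldots,t_n},\posm{n}{i}{h}{s}{T_1,\ldots,T_n}) = \sched{n}(h,s)(i) \mul \prod_{k=1}^{n} P(t_k,T_k)$ when $\sched{n}(h,s)$ is defined, while the purely multiplicative identity holds as you state it only for $\posmop{n}{i}{h}{s}$. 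With that correction your aggregation argument for \axiom{SI1'} (summing over $i$ using $\sum_{i=1}^{n}\sched{n}(h,s)(i)=1$ and the cancellation of the normalising denominators in the nested probabilistic choice) goes through as intended.
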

\begin{proof}
The proof goes along the same line as the proof of 
Theorem~\ref{thm-soundness-pACPr}.
\qed
\end{proof}

\subsection{Guarded Recursive Specifications over \pACP\ and \pACPpSI}
\label{subsect-expressivity}

In this subsection, we show that each guarded recursive specifications 
over \pACPpSI\ can be reduced to a guarded recursive specification over 
\pACP.
We make use of the fact that each guarded \pACPpSI\ term has a head 
normal form. 

Let $T$ be \pACPpSI\ or \pACPpSIr.
The set $\HNF$ of \emph{head normal forms of} $T$ is inductively defined 
by the following rules:
\begin{itemize}
\item 
$\dead \in \HNF$;
\item 
if $a \in \Act$, then $a \in \HNF$;
\item 
if $a \in \Act$ and $t$ is a $T$ term, then $a \seqc t \in \HNF$;
\item 
if $t,t' \in \HNF$, then $t \altc t '\in \HNF$;
\item 
if $t,t' \in \HNF$ and $\pi \in \Prob$, 
then $t \paltc{\pi} t '\in \HNF$.
\end{itemize}
Each head normal form of $T$ is derivably equal to a head normal 
form of the form $\PRC{i=1}{n}{\pi_i}{s_i}$, where $n \in \Natpos$ and, 
for each $i \in \Natpos$ with $i \leq n$, $s_i$ is of the form
$\Altc{j=1}{n_i} a_{ij} \seqc t_{ij} \altc \Altc{k=1}{m_i} b_{ik}$,
where $n_i,m_i \in \Natpos$ and, for all $j \in \Natpos$ with 
$j \leq n_i$, $a_{ij} \in \Act$ and $t_{ij}$ is a $T$ term, and, 
for all $k \in \Natpos$ with $k \leq m_i$, $b_{ik} \in \Act$. 

Each guarded \pACPpSIr\ term is derivably equal to a head normal form of 
\pACPpSIr.
\begin{proposition}
\label{prop-HNF-pACPpSIr}
For each guarded \pACPpSIr\ term $t$, there exists a head normal form 
$t'$ of \pACPpSIr\ such that $t = t'$ is derivable from the axioms of 
\pACPpSIr.
\end{proposition}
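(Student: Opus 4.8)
The plan is to prove the proposition by induction on the structure of the guarded \pACPpSIr\ term $t$, establishing that every such $t$ is derivably equal to a head normal form. First I would treat the base cases: when $t$ is $\dead$ or an action constant $a \in \Act$, the term is already a head normal form, and when $t$ is of the form $a \seqc t'$ with $a \in \Act$, it is again a head normal form directly by the defining rules of $\HNF$. The constant case $\rec{X}{E}$ is handled by applying RDP to rewrite it as $\srec{t}{E}$, where $X = t \in E$; since $E$ is guarded, $\srec{t}{E}$ can be taken to be a guarded term of strictly simpler form, and the induction hypothesis applies.

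\medskip

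Next I would handle the operators one at a time. For the composite \pACP\ operators --- alternative composition, probabilistic choice, sequential composition, parallel composition, left merge, communication merge, and encapsulation --- I would apply the induction hypothesis to the immediate subterms to put them in head normal form, and then show that each operator applied to head normal forms can be driven back into head normal form. The cases of $\altc$ and $\paltc{\pi}$ are immediate from the closure rules defining $\HNF$. For $\seqc$, $\parc$, $\leftm$, $\commm$, and $\encap{H}$, one distributes the operator over the $\altc$- and $\paltc{\pi}$-structure of the head normal forms using axioms such as A4, pA4, pA5, CM1$'$, CM4, pCM1--pCM6, D3, D4, and pD, and then simplifies the resulting summands with CM2--CM12 and D1--D2; the guardedness ensures that the recursive subterms sit behind an action prefix so the process terminates. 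The genuinely new work is the strategic interleaving operators $\siop{n}{h}{s}$ and $\posmop{n}{i}{h}{s}$: here I would first reduce each argument to head normal form by the induction hypothesis, then peel off the leading probabilistic choice of each argument using pSI1 (respectively pSI2) so that every argument is of the non-probabilistic summand form, then apply SI1$'$ (respectively SI8) to split over the $\altc$-structure, and finally apply SI2--SI7 to each positional summand to expose a leading action (or $\dead$), which yields a head normal form.

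\medskip

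The main obstacle will be the strategic interleaving case, and specifically making the informal ``apply SI1$'$ then SI2--SI8'' argument rigorous. The difficulty is threefold. First, SI0$'$ and SI1$'$ are conditional equations whose premises are the conjunction $x_1 = x_1 \altc x_1 \Land \ldots \Land x_n = x_n \altc x_n$, so before they can be used I must discharge these premises; by the remark following CM1$'$ in the excerpt, $t_j = t_j \altc t_j$ is derivable precisely when $t_j$ is not initially probabilistic, which is why the probabilistic choices must be lifted out first via pSI1. Second, the probability weights produced by SI1$'$ --- namely $\sched{n}(h,s)(i)$ --- sum to $1$ by the assumption on abstract schedulers, so the resulting right-nested probabilistic choice $\PRC{i=1}{n}{\sched{n}(h,s)(i)}{\cdots}$ is well-formed, but I must check the side condition $\sum_{i<n}\pi_i = 1$ of the $\PRC{}{}{}{}$ notation and handle the case where $\sched{n}(h,s)$ is undefined separately via SI0$'$. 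Third, and most delicately, the positional operator axioms SI4--SI7 reintroduce strategic interleaving operators inside the leading-action summands (e.g.\ $a \seqc \si{n}{h \concat \tup{i,n}}{\cdots}{\cdots}$), so the induction cannot be a naive structural induction on $t$ alone; these reintroduced operators sit guarded behind the action $a$, and I would argue that the summands $a \seqc t''$ already belong to $\HNF$ by its third defining clause, which closes the case without needing to normalize $t''$ further. I would make this precise by observing that each such summand matches the schema $a \seqc t''$ with $t''$ a \pACPpSIr\ term, so no deeper recursion on the interleaving operators is required.
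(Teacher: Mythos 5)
Your overall strategy --- a structural induction in which, for each operator, one proves a subsidiary claim that the operator applied to head normal forms is derivably equal to a head normal form --- is the same as the paper's, and your treatment of the strategic interleaving operators (lift the probabilistic choices out with pSI1/pSI2, discharge the premises of SI0$'$/SI1$'$, split over the $\altc$-structure with SI8, expose a leading action with SI2--SI7, and observe that the interleaving operators reintroduced by SI4--SI7 sit behind an action prefix and therefore need no further normalization because $a \seqc t''$ is a head normal form for \emph{any} term $t''$) is a correct and considerably more detailed elaboration of what the paper dispatches with ``in the same vein as the case $t_1 \seqc t_2$''.

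There is, however, one genuine gap: the case $\rec{X}{E}$. You apply RDP to replace $\rec{X}{E}$ by $\srec{t}{E}$ and then assert that this term is ``of strictly simpler form'' so that the induction hypothesis applies. That is false: $\rec{X}{E}$ is a constant, so $\srec{t}{E}$ is in general structurally \emph{larger}, and a naive structural induction does not terminate at this step (the recursion constants occurring in $\srec{t}{E}$ can be unfolded again, ad infinitum). The paper avoids this by a two-stage argument: it first proves the result for guarded \pACPpSI\ terms, which contain no recursion constants, so the structural induction is unproblematic (and the variable case is excluded by guardedness); it then settles the case $\rec{X}{E}$ in one step by rewriting the body $t$ to a guarded \pACPpSI\ term, normalizing that \emph{open} term via the weaker result, and substituting the recursion constants back in --- substitution preserves head normal forms because variables in a head normal form occur only behind an action prefix. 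You need some such stratification (or an explicit well-founded measure that decreases under RDP-unfolding, which is harder to supply) for your induction to go through; the rest of your argument is sound.
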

\begin{proof}
First we prove the following weaker result about head normal forms:
\begin{quote}
\emph{For each guarded \pACPpSI\ term $t$, there exists a head normal 
form $t'$ of \pACPpSI\ such that $t = t'$ is derivable from the axioms 
of \pACPpSI}.
\end{quote}
The proof is straightforward by induction on the structure of $t$.
The case where $t$ is of the form $\dead$ and the case where $t$ is of 
the form $a$ ($a \in \Act$) are trivial.
The case where $t$ is of the form $t_1 \seqc t_2$ follows immediately 
from the induction hypothesis (applied to $t_1$) and the claim that, for 
all head normal forms $t_1'$ and $t_2'$ of \pACPpSI, there exists a head 
normal form $t'$ of \pACPpSI\ such that $t_1' \seqc t_2' = t'$ is 
derivable from the axioms of \pACPpSI.
This claim is easily proved by induction on the structure of $t_1'$.
The cases where $t$ is of the form $t_1 \altc t_2$ or 
$t_1 \paltc{\pi} t_2$ follow immediately from the induction hypothesis.
The cases where $t$ is of one of the forms $t_1 \leftm t_2$, 
$t_1 \commm t_2$ or $\encap{H}(t_1)$ are proved in the same vein as 
the case where $t$ is of the form $t_1 \seqc t_2$.
In the case that $t$ is of the form $t_1 \commm t_2$, each of the cases 
to be considered in the inductive proof of the claim demands a (nested) 
proof by induction on the structure of $t_2'$.
The case that $t$ is of the form $t_1 \parc t_2$ follows immediately 
from the case that $t$ is of the form $t_1 \leftm t_2$ and the case that 
$t$ is of the form $t_1 \commm t_2$.
The case where $t$ is of the form $\posm{n}{i}{h}{s}{t_1,\ldots,t_n}$ is 
proved in the same vein as the case where $t$ is of the form 
$t_1 \seqc t_2$, but the claim is of course proved by induction on the 
structure of $t_i'$ instead of $t_1'$.
The case that $t$ is of the form $\si{n}{h}{s}{t_1,\ldots,t_n}$ follows 
immediately from the case that $t$ is of the form 
\smash{$\posm{n}{i}{h}{s}{t_1,\ldots,t_n}$}.
Because $t$ is a guarded \pACPpSI\ term, the case where $t$ is a 
variable cannot occur.

The proof of the proposition itself is also straightforward by induction 
on the structure of $t$.
The cases other than the case where $t$ is of the form $\rec{X}{E}$ is 
proved in the same way as in the above proof of the weaker result.
The case where $t$ is of the form $\rec{X}{E}$ follows immediately from
the weaker result and RDP.
\qed
\end{proof}

The following theorem refers to three process algebras.
It is implicit that the same set $\Act$ of actions and the same 
communication function $\commf$ are assumed in the process algebras
referred to.

Each guarded recursive specification over \pACPpSI\ can be reduced to a
guarded recursive specification over \pACP. 
\begin{theorem}[Expressivity]
\label{thm-expressivity}
\sloppy
For each guarded recursive specification $E$ over \pACPpSI\ and each 
$X \in \vars(E)$, there exists a guarded recursive specification $E'$ 
over \pACP\ such that $\rec{X}{E} = \rec{X}{E'}$ is derivable from the 
axioms of \pACPpSIr.
\end{theorem}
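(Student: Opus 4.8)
The plan is to convert the recursive specification $E$ over \pACPpSI\ into an equivalent one over \pACP\ by introducing a fresh variable for every distinct subterm whose outermost operator is a strategic interleaving operator (and any auxiliary operator) that arises during the computation of head normal forms. The main engine is Proposition~\ref{prop-HNF-pACPpSIr}: every guarded \pACPpSIr\ term is derivably equal to a head normal form, and the standard head normal form $\PRC{i=1}{n}{\pi_i}{s_i}$ with each $s_i$ of the shape $\Altc{j=1}{n_i} a_{ij} \seqc t_{ij} \altc \Altc{k=1}{m_i} b_{ik}$ exposes the process as a probabilistic choice over alternative compositions of action prefixes followed by residual \pACPpSIr\ terms $t_{ij}$. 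The crucial point is that each residual $t_{ij}$ is again a guarded \pACPpSIr\ term, so the construction can be iterated.

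First I would set up the construction precisely. Starting from $\rec{X}{E}$, I repeatedly take head normal forms of the right-hand sides: for each recursion variable $X_i$ with equation $X_i = t_i$ in $E$, rewrite $\srec{t_i}{E}$ to a head normal form using Proposition~\ref{prop-HNF-pACPpSIr} and RDP, and then introduce a fresh recursion variable for each residual term $t_{ij}$ occurring in that head normal form. This yields new equations $Y = \mathit{hnf}(Y)$, where the right-hand side is a head normal form whose action-prefixed residuals are themselves (names for) guarded \pACPpSIr\ terms. Collecting all such equations produces a candidate recursive specification $E'$ whose equations use only the operators $\altc$, $\seqc$, $\paltc{\pi}$ and action/inaction constants --- that is, a specification over \pACP. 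Each equation is manifestly guarded because every occurrence of a fresh variable sits inside a subterm $a \seqc (\cdots)$. Finally, by RDP each $\rec{X_i}{E}$ satisfies its head-normal-form equation, so the constants $\rec{X_i}{E}$ form a solution of $E'$; by RSP (the uniqueness direction) this forces $\rec{X}{E} = \rec{X}{E'}$, which is the desired derivability from the axioms of \pACPpSIr.

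The hard part will be guaranteeing that the set of fresh variables --- equivalently, the set of distinct residual terms reachable by iterated head-normalisation --- is at most countable, so that $E'$ is a legitimate (finite or countably infinite) recursive specification. One must argue that although process creation via $\pcr(d)$ can increase the number of parallel components, and strategic interleaving operators $\siop{n}{h}{s}$ and $\posmop{n}{i}{h}{s}$ index over all interleaving histories and control states, the residuals actually generated from a fixed $E$ are built from the finitely many recursion variables of $E$, the countably many \pACP(pSI)r subterms occurring in $E$ and in the range of $\crea$, and histories/states reachable in finitely many steps; hence the reachable residuals form a countable set. I would make this precise by a reachability argument: define the set of residuals inductively as the closure of $\set{\srec{t_i}{E} \where X_i = t_i \in E}$ under the map sending a guarded term to the residuals appearing in its head normal form, and show by bounding the syntactic ingredients at each stage that this closure is countable. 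Once countability is secured, guardedness and the RDP/RSP argument are routine, so the reachability/countability bookkeeping is the genuine obstacle; the remaining steps are mechanical applications of results already established.
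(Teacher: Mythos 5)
Your proposal is correct and follows essentially the same route as the paper: repeatedly head-normalise the right-hand sides via Proposition~\ref{prop-HNF-pACPpSIr} (with RDP), introduce fresh recursion variables for the residuals $t'_{ij}$, and conclude with RDP/RSP that $\rec{X}{E} = \rec{X}{E'}$. The countability concern you single out is dispatched in the paper by organising the construction as a worklist algorithm in which each stage processes one equation and adds only the finitely many residuals of one head normal form, so the limit over countably many stages is automatically a finite or countably infinite specification.
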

\begin{proof}
We start with devising an algorithm to construct the guarded recursive 
specification $E'$.
The algorithm keeps a set $V$ of recursion equations from $E'$ that are 
already found and a sequence $W$ of equations of the form 
$X_k = \rec{t_k}{E}$ that still have to be transformed.
The algorithm has a finite or countably infinite number of stages.
In each stage, $V$ and $W$ are finite.
Initially, $V$ is empty and $W$ contains only the equation 
$X_0 = \rec{X}{E}$.

In each stage, we remove the first equation from $W$.
Assume that this equation is $X_k = \rec{t_k}{E}$. 
We bring the term $\rec{t_k}{E}$ into head normal form. 
If $t_k$ is not a guarded term, then we use RDP here to turn $t_k$ into 
a guarded term first.
Thus, by Proposition~\ref{prop-HNF-pACPpSIr}, we can always bring 
$\rec{t_k}{E}$ into head normal form.
Assume that the resulting head normal form is
$\PRC{i=1}{n}{\pi_i}
  {(\Altc{j=1}{n_i} a_{ij} \seqc t'_{ij} \altc
    \Altc{k=1}{m_i} b_{ik})}$.
Then, we add the equation 
$X_k = 
 \PRC{i=1}{n}{\pi_i}
  {(\Altc{j=1}{n_i} a_{ij} \seqc X_{k+(\sum_{i'=1}^{i} n_{i'})+j} \altc
    \Altc{k=1}{m_i} b_{ik})}$,
where the $X_{k+(\sum_{i'=1}^{i} n_{i'})+j}$ are fresh variables, to the 
set $V$.
Moreover, for each $i$ and $j$ such that $1 \leq i \leq n$ and 
$1 \leq j \leq n_i$, we add the equation 
$X_{k+(\sum_{i'=1}^{i} n_{i'})+j} = t'_{ij}$ 
to the end of the sequence $W$.
Notice that the terms $t'_{ij}$ are of the form 
$\rec{t_{k+(\sum_{i'=1}^{i} n_{i'})+j}}{E}$.

Because $V$ grows monotonically, there exists a limit. 
That limit is the finite or countably infinite guarded recursive 
specification $E'$.
Every equation that is added to the finite sequence $W$, is also removed 
from it.
Therefore, the right-hand side of each equation from $E'$ only contains
variables that also occur as the left-hand side of an equation from 
$E'$.

\sloppy
Now, we want to use RSP to show that $\rec{X}{E} = \rec{X}{E'}$ is 
derivable from the axioms of \pACPpSIr.
The variables occurring in $E'$ are $X_0, X_1, X_2, \ldots\;$.
For each $k$, the variable $X_k$ has been exactly once in $W$ as the 
left-hand side of an equation.
For each $k$, assume that  this equation is $X_k = \rec{t_k}{E}$.
To use RSP, we have to show for each $k$ that the equation
$X_k = 
 \PRC{i=1}{n}{\pi_i}
  {\linebreak
   (\Altc{j=1}{n_i} a_{ij} \seqc X_{k+(\sum_{i'=1}^{i} n_{i'})+j} \altc
    \Altc{k=1}{m_i} b_{ik})}$,
with, for each $l$, all occurrences of $X_l$ replaced by $\rec{t_l}{E}$,
is derivable from the axioms of \pACPpSIr.
For each $k$, this follows from the construction.
\qed
\end{proof}
Theorem~\ref{thm-expressivity} would not hold if guarded recursive 
specifications were restricted to finite sets of recursion equations.

Let $t$ be a closed \pACP\ term or a closed \pACPpSI\ term, and 
let $X \in \cX$. 
Then $\rec{X}{\set{X = t}} = t$ is derivable from RDP.
This gives rise to the following corollary of 
Theorem~\ref{thm-expressivity}.
\begin{corollary}
\label{corol-expressivity}
For each closed \pACPpSIr\ term $t$, there exists a closed \pACPr\ term 
$t'$ such that $t = t'$ is derivable from the axioms of \pACPpSIr.
\end{corollary}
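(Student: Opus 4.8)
The plan is to reduce the statement to Theorem~\ref{thm-expressivity} by first showing that every closed \pACPpSIr\ term is derivably equal to a single recursion constant over \pACPpSI. Concretely, I would establish the auxiliary claim that, for each closed \pACPpSIr\ term $t$, there exist a guarded recursive specification $E$ over \pACPpSI\ and an $X \in \vars(E)$ such that $t = \rec{X}{E}$ is derivable from the axioms of \pACPpSIr. Granting this claim, Theorem~\ref{thm-expressivity} supplies a guarded recursive specification $E'$ over \pACP\ with $\rec{X}{E} = \rec{X}{E'}$ derivable; since $\rec{X}{E'}$ is a closed \pACPr\ term, taking $t' \equiv \rec{X}{E'}$ finishes the proof.

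For the auxiliary claim, I would exploit that the recursion constants occurring in a closed \pACPpSIr\ term are all of the form $\rec{X_i}{E_i}$ with each $E_i$ a guarded recursive specification over \pACPpSI; in particular these specifications contain no further recursion constants, so the constants do not nest. Listing the finitely many distinct such subterms $\rec{X_1}{E_1},\ldots,\rec{X_m}{E_m}$ of $t$ and renaming variables so that the sets $\vars(E_1),\ldots,\vars(E_m)$ are pairwise disjoint and avoid a fresh variable $Y$, I would form the combined specification $E = \set{Y = \bar t} \Union \Union_{i=1}^{m} E_i$, where $\bar t$ is $t$ with every occurrence of $\rec{X_i}{E_i}$ replaced by the variable $X_i$. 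Then $E$ is a recursive specification over \pACPpSI, and I would derive $t = \rec{Y}{E}$ as follows: by RDP, $\rec{Y}{E}$ equals $\bar t$ with each $X_i$ replaced by $\rec{X_i}{E}$; and since the equations of each $E_i$ are among those of $E$, the family $\set{\rec{Z}{E} \where Z \in \vars(E_i)}$ solves $E_i$, so RSP for $E_i$ gives $\rec{X_i}{E} = \rec{X_i}{E_i}$, which turns the right-hand side back into $t$.

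The step that needs genuine care is verifying that the combined specification $E$ is \emph{guarded}. The equations taken over from the $E_i$ remain guarded, since guardedness asks only for the existence of a rewrite to a guarded term and the rewrites available inside $E_i$ are still available in the larger $E$. The delicate equation is the top one, $Y = \bar t$, because $\bar t$ may contain unguarded occurrences of the $X_i$ (for instance when $t$ is $\rec{X_1}{E_1} \altc \rec{X_2}{E_2}$, so that $\bar t \equiv X_1 \altc X_2$). Here I would use that guardedness is checked modulo the remaining equations of $E$ read from left to right: rewriting each surface $X_i$ in $\bar t$ by the defining term of $E_i$ and then rewriting that term to a guarded \pACPpSI\ term (possible because $E_i$ is guarded) replaces every surface occurrence of an $X_i$ by a term all of whose variable occurrences are already guarded, and these are the only variable occurrences of $\bar t$, so the result is guarded. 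I expect this guardedness bookkeeping, rather than the RDP/RSP manipulation, to be the main obstacle, with the base case ($t$ a closed \pACPpSI\ term, where $\bar t \equiv t$ is vacuously guarded and the argument collapses to the remark preceding the corollary) serving as a sanity check.
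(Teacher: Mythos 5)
Your proposal is correct, and its core route is the one the paper takes: wrap the term in a recursive specification, invoke Theorem~\ref{thm-expressivity}, and read off $t'$ as a recursion constant over \pACP. The paper, however, justifies the corollary only by the remark that $\rec{X}{\set{X = t}} = t$ follows from RDP for $t$ a closed \pACP\ or \pACPpSI\ term --- which, read literally, covers exactly the case you call the base case, since $\set{X = t}$ is a recursive specification over \pACPpSI\ only when $t$ contains no recursion constants. Your flattening construction --- collecting the (non-nested, finitely many) constants $\rec{X_i}{E_i}$ occurring in $t$, renaming apart, forming $E = \set{Y = \bar t} \union \Union_i E_i$, recovering $t = \rec{Y}{E}$ via RDP and RSP, and checking guardedness of $Y = \bar t$ by rewriting each surface $X_i$ through its defining equation --- is the extra work needed for arbitrary closed \pACPpSIr\ terms, and your guardedness bookkeeping is sound: guarded occurrences stay guarded under placement in a context, the equations of each $E_i$ remain available in the larger $E$, and the only variable occurrences of $\bar t$ are the surface $X_i$. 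So what you have is not a different method but a complete proof of a step the paper leaves implicit; the paper's version buys brevity, yours buys the general case actually asserted by the corollary.
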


\subsection{An Example}
\label{subsect-mutex}

In this subsection, we instantiate the generic interleaving strategy on 
which \pACPpSI\ is based with a specific interleaving strategy.
The interleaving strategy concerned corresponds to a scheduling 
algorithm that:
\begin{itemize}
\item
selects randomly, according to a uniform probability distribution, the 
next process that gets turns to perform an action;
\item
gives the selected process a fixed number $k$ of consecutive turns to 
perform an action;
\item
takes care of mutual exclusion of critical subprocesses of the different 
processes being interleaved.
\end{itemize}
Mutual exclusion of certain subprocesses is the condition that they are 
not interleaved and critical subprocesses are subprocesses that possibly 
interfere with each other when this condition is not met.
The adopted mechanism for mutual exclusion is essentially a binary 
semaphore mechanism~\cite{Ben06a,Bri73a,Dij68a}.
Below binary semaphores are simply called \emph{semaphores}.

In this section, it is assumed that a fixed but arbitrary natural number 
$k \in \Natpos$ has been given.
We use $k$ as the number of consecutive turns that each process being
interleaved gets to perform an action.

Moreover, it is assumed that a finite set $R$ of semaphores has been 
given.
\linebreak[2]
We instantiate the set $C$ of control actions as follows: 
\begin{ldispl}
C = \set{\Pop(r) \where r \in R} \union \set{\Vop(r) \where r \in R}\;,
\end{ldispl}%
hereby taking for granted that $C$ satisfies the necessary conditions. 
The $\Pop$ and $\Vop$ actions correspond to the $\mathsf{P}$ and 
$\mathsf{V}$ operations from~\cite{Dij68a}.

We instantiate the set $S$ of control states as follows: 
\begin{ldispl}
S = \Union_{R' \subseteq R} (\mapof{R'}{\seqof{\Natpos\!}})\;.
\end{ldispl}%
The intuition concerning the connection between control states $s \in S$ 
and the semaphore mechanism as introduced in~\cite{Dij68a} is as 
follows:
\pagebreak[2]
\begin{itemize}
\item
$r \notin \dom(s)$ indicates that semaphore $r$ has the value $1$;
\item
$r \in \dom(s)$ indicates that semaphore $r$ has the value $0$;
\item
$r \in \dom(s)$ and $s(r) = \emptyseq$ indicates that no process is 
suspended on \nolinebreak[2] sema\-phore $r$;
\item
if $r \in \dom(s)$ and $s(r) \neq \emptyseq$, then $s(r)$ represents 
a first-in, first-out queue of processes suspended on $r$.
\end{itemize}

As a preparation for the instantiation of the abstract schedulers 
$\sched{n}$ and control state transformers $\updat{n}$, we define some
auxiliary functions.

We define a total function $\funct{\xturns}{\Hist \x \Natpos}{\Nat}$ 
recursively as follows:
\begin{ldispl}
\begin{gceqns}
\xturns(\emptyseq,i) = 0\;,                                          \\
\xturns(h \concat \tup{j,n},i) = 0                & \mif i \neq j\;, \\
\xturns(h \concat \tup{j,n},i) = \xturns(h,i) + 1 & \mif i = j\;.    
\end{gceqns}
\end{ldispl}%
If $\xturns(h,i) = l$ and $l > 0$, then the interleaving history $h$ 
ends with $l$ consecutive turns of the $i$th process being interleaved.
If $\xturns(h,i)= 0$, then the interleaving history $h$ does not end 
with turns of the $i$th process being interleaved.

We define a total function $\funct{\xwaiting}{S}{\setof{\Natpos}}$ as 
follows:
\begin{ldispl}
\begin{geqns}
\xwaiting(s) = \Union_{r \in \dom(s)} \elems(s(r))\;. 
\end{geqns}
\end{ldispl}%
If $\xwaiting(s) = I$, then $i \in I$ iff the $i$th process being 
interleaved is suspended on one or more semaphores in control state $s$.

We define a total function $\funct{\xready{n}}{\Hist \x S}{\set{0,1}}$, 
for each $n \in \Natpos$, as follows:
\begin{ldispl}
\begin{gceqns}
\xready{n}(h,s) = 1
& \mif 
\sum_{i \in \set{1,\ldots,n} \diff \xwaiting(s)}
 \xturns(h,i) \in \set{0,k}\;,
\\ 
\xready{n}(h,s) = 0
& \mif 
\sum_{i \in \set{1,\ldots,n} \diff \xwaiting(s)}
 \xturns(h,i) \notin \set{0,k}\;. 
\end{gceqns}
\end{ldispl}%
If $\xready{n}(h,s) = b$, then $b = 1$ iff the interleaving history $h$ 
ends with a number of consecutive turns of some process that equals $k$ 
if that process is not suspended in control state $s$.

We define a partial function 
$\pfunct{\xsched{n}}{\Hist \x S}{(\set{1,\ldots,n} \to \Prob)}$, 
for each $n \in \Natpos$, as follows:
\begin{ldispl}
\begin{geqns}
\xsched{n}(h,s)(i) = 1 / (n - \card(\xwaiting(s))) \\ \qquad
\mif \xready{n}(h,s) = 1 \Land i \notin \xwaiting(s) \Land
     \xwaiting(s) \neq \set{1,\ldots,n}\;, 
\\
\xsched{n}(h,s)(i) = 0 \\ \qquad
\mif \xready{n}(h,s) = 1 \Land i \in \xwaiting(s) \Land
     \xwaiting(s) \neq \set{1,\ldots,n}\;, 
\\   
\xsched{n}(h,s)(i) = 1 \\ \qquad
\mif \xready{n}(h,s) = 0 \Land \xturns(h,i) \neq 0 \Land
     \xwaiting(s) \neq \set{1,\ldots,n}\;, 
\\   
\xsched{n}(h,s)(i) = 0 \\ \qquad
\mif \xready{n}(h,s) = 0 \Land \xturns(h,i) = 0 \Land
     \xwaiting(s) \neq \set{1,\ldots,n}\;. 
\end{geqns}
\end{ldispl}%
The function $\xsched{n}$ represents a scheduler that work as follows: 
when a process has been given $k$ consecutive turns to perform an action 
or has been suspended, the next process that is given turns is randomly 
selected, according to a uniform probability distribution, from the 
processes being interleaved that are not suspended.
Notice that $\xsched{n}(h,s)(i)$ is undefined if 
$\xwaiting(s) = \set{1,\ldots,n}$.
In that case, none of the processes being interleaved can be given a 
turn and the whole becomes inactive.

We define a total function 
$\funct{\xremove{n}}{S \x \set{1,\ldots,n}}{S}$ 
recursively as follows:%
\footnote
{The special function notation used in this paper is explained in an
 appendix.}
\begin{ldispl}
\begin{geqns}
\xremove{n}(\emptymap,i) = \emptymap\;, \\
\xremove{n}(s \owr \maplet{r}{q},i) = 
\xremove{n}(s,i) \owr \maplet{r}{\xremovep{n}(q,i)}\;,
\end{geqns}
\end{ldispl}%
where the total function 
$\funct{\xremovep{n}}
  {\seqof{\Natpos} \x \set{1,\ldots,n}}{\seqof{\Natpos}}$
is recursively defined as follows:
\begin{ldispl}
\begin{gceqns}
\xremovep{n}(\emptyseq,i) = \emptyseq\;, \\
\xremovep{n}(j \concat q,i) = j \concat \xremovep{n}(q,i)
                                                    & \mif j < i\;, \\
\xremovep{n}(j \concat q,i) = \xremovep{n}(q,i)     & \mif j = i\;, \\
\xremovep{n}(j \concat q,i) = (j - 1) \concat \xremovep{n}(q) 
                                                    & \mif j > i\;.
\end{gceqns}
\end{ldispl}%
If $\xremove{n}(s,i) = s'$, then $s'$ is $s$ adapted to the successful
termination of the $i$th process of the processes being interleaved.

For each $n \in \Natpos$, we instantiate the abstract scheduler 
$\sched{n}$ and control state transformer $\updat{n}$ as follows:
\begin{ldispl}
\begin{gceqns}
\sched{n}(h,s) = \xsched{n}(h,s)\;,
\eqnsep
\updat{n}(\emptyseq,s,i,a,0) = \emptymap     & \mif a \notin C\;,     \\
\updat{n}(h \concat \tup{j,n},s,i,a,0) = s   & \mif a \notin C\;,     \\
\updat{n}(\emptyseq,s,i,\Pop(r),0) = \maplet{r}{\emptyseq}\;,         \\
\updat{n}(h \concat \tup{j,n},s,i,\Pop(r),0) = 
s \owr \maplet{r}{\emptyseq}               & \mif r \notin \dom(s)\;, \\
\updat{n}(h \concat \tup{j,n},s,i,\Pop(r),0) = 
s \owr \maplet{r}{s(r) \concat i}          & \mif r \in \dom(s)\;,    \\
\updat{n}(\emptyseq,s,i,\Vop(r),0) = \emptymap\;,                     \\
\updat{n}(h \concat \tup{j,n},s,i,\Vop(r),0) = s             
                                           & \mif r \notin \dom(s)\;, \\
\updat{n}(h \concat \tup{j,n},s,i,\Vop(r),0) = s \dsub \set{r}            
                    & \mif r \in \dom(s) \Land s(r) = \emptyseq\;,    \\
\updat{n}(h \concat \tup{j,n},s,i,\Vop(r),0) = 
s \owr \maplet{r}{\tl(s(r))}             
                    & \mif r \in \dom(s) \Land s(r) \neq \emptyseq\;, \\
\updat{n}(h,s,i,a,1) = \xremove{n}(s,i)\;.
\end{gceqns}
\end{ldispl}%
The following clarifies the connection between the instantiated control 
state transformers $\updat{n}$ and the semaphore mechanism as introduced 
in~\cite{Dij68a}:
\begin{itemize}
\item
$s = \emptymap$ indicates that all semaphores have value $1$;
\item
if $r \notin \dom(s)$, then the transition from $s$ to 
$s \owr \maplet{r}{\emptyseq}$ indicates that the value of semaphore $r$ 
changes from $1$ to $0$;
\item
if $r \in \dom(s)$, then the transition from $s$ to 
$s \owr \maplet{r}{s(r) \concat i}$ indicates that \linebreak[2] the 
$i$th process being interleaved is added to the queue of processes 
suspended on semaphore $r$;
\item
if $r \notin \dom(s)$, then the transition from $s$ to $s$ indicates 
that the value of semaphore $r$ remains $1$;
\item
if $r \in \dom(s)$ and $s(r) = \emptyseq$, then the transition from $s$ 
to $s \dsub \set{r}$ indicates that the value of semaphore $r$ changes 
from $0$ to $1$;
\item
if $r \in \dom(s)$ and $s(r) \neq \emptyseq$, then the transition from 
$s$ to $s \owr \maplet{r}{\tl(s(r))}$ \linebreak[2] indicates that the 
first process in the queue of processes suspended on semaphore $r$ is 
removed from that queue.
\end{itemize}

The example given above is only meant to show that the generic 
probabilistic interleaving strategy assumed in \pACPpSI\ can be 
instantiated with non-trivial specific probabilistic interleaving 
strategies.
In practice, more advanced probabilistic interleaving strategies, such 
as strategies based on lottery scheduling~\cite{WW94a}, are more 
important.

\section{Concluding Remarks}
\label{sect-concl}

We have presented a probabilistic version of \ACP~\cite{BW90,BK84b} that 
rests on the principle that probabilistic choices are always resolved 
before choices involved in alternative composition and parallel 
composition are resolved.
By taking functions whose range is the carrier of a signed cancellation 
meadow~\cite{BBP13a,BT07a} instead of a field as probability measures, 
we could include probabilistic choice operators for the probabilities 
$0$ and $1$ without any problem and give a simple operational semantics.

We have also extended this probabilistic version of \ACP\ with a form of 
interleaving in which parallel processes are interleaved according to 
what is known as a process-scheduling policy in the field of operating 
systems. 
This is the form of interleaving that underlies multi-threading as found 
in contemporary programming languages.
To our knowledge, the work presented in~\cite{BM17b} and this paper is 
the only work on this form of interleaving in the setting of a general 
algebraic theory of processes like ACP, CCS and CSP.

The main probabilistic versions of \ACP\ introduced earlier are 
prACP~\cite{BBS95a}, \pACPpa~\cite{And99a}, and \pACPt~\cite{AG09a}.
Like \pACP, those probabilistic versions of \ACP\ are based on the 
generative model of probabilistic processes.
In prACP, the alternative composition operator and the parallel 
composition operator are replaced by probabilistic choice operators and 
probabilistic parallel composition operators.
In \pACPpa, no operators are replaced, but probabilistic choice 
operators are added.
The parallel composition operator of \pACPpa\ is somewhat tricky because 
probabilistic choices are not resolved before choices involved in 
parallel composition are resolved.
\pACPt\ is, apart from abstraction, \pACPpa\ with another parallel 
composition operator where probabilistic choices are resolved before 
choices involved in parallel composition are resolved.
\pACP\ is a minor variant of \pACPt\ without abstraction operators.
The differences and their consequences are described in the first and 
last but one paragraph of Section~\ref{subsect-remarks}.

In this paper, we consider strategic interleaving where process creation 
is taken into account.
The approach to process creation followed originates from the one first 
followed in~\cite{Ber90a} to extend \ACP\ with process creation and 
later followed in~\cite{BB93a,BM02a,BMU98a} to extend different timed 
versions of \ACP\ with process creation.
The only other approach that we know of is the approach, based 
on~\cite{AB88a}, that has for instance been followed 
in~\cite{BV92a,GR97a}.
However, with that approach, it is most unlikely that data about the
creation of processes can be made available for the decision making 
concerning the strategic interleaving of processes.

\appendix

\section*{Appendix: Sequence Notation and Function Notation}
\label{appendix-notations}

We use the following sequence notation:
\begin{itemize}
\item
$\emptyseq$ for the empty sequence;
\item 
$d$ for the sequence having $d$ as sole element;
\item
$u \concat v$ for the concatenation of sequences $u$ and $v$;
\item
$\hd(u)$ for the first element of non-empty sequence $u$;
\item
$\tl(u)$ for the subsequence of non-empty sequence $u$ whose first 
element is the second element of $u$ and whose last element is the last 
element of $u$;
\item
$\elems(u)$ is the set of all elements of sequence $u$. 
\end{itemize}
\pagebreak[2]
We use the following special function notation:
\begin{itemize}
\item
$\emptymap$ for the empty function; 
\item
$\maplet{d}{e}$ for the function $f$ with $\dom(f) = \set{d}$ such that
$f(d) = e$; 
\item
$f \owr g$ for the function $h$ with $\dom(h) = \dom(f) \union \dom(g)$
such that for all $d \in \dom(h)$, $h(d) = f(d)$ if
$d \notin \dom(g)$ and $h(d) = g(d)$ otherwise;
\item
$f \dsub S$ for the function $g$ with $\dom(g) = \dom(f) \diff S$
such that for all $d \in \dom(g)$, $g(d) = f(d)$.
\end{itemize}

\bibliographystyle{splncs03}
\bibliography{PA}

\end{document}